\newcommand{\polymorphicsessiontypes}[0]{nested session types}
\newcommand{\PolymorphicSessionTypes}[0]{Nested Polymorphic Session Types}
\newlength{\rWidth}
\newcommand{\m}[1]{\mathsf{#1}}
\newcommand{\mb}[1]{\mathbf{#1}}
\newcommand{\Sg}{\Sigma}
\newcommand{\G}{\Gamma}
\newcommand{\lolli}{\multimap}
\newcommand{\tensor}{\otimes}
\newcommand{\with}{\mathbin{\binampersand}}
\newcommand{\one}{\mathbf{1}}
\newcommand{\semi}{\; ; \;}
\newcommand{\ichoiceop}{\oplus}
\newcommand{\echoiceop}{\with}
\newcommand{\ichoice}[1]{\mathord{\ichoiceop} \{ #1 \}}
\newcommand{\echoice}[1]{\mathord{\echoiceop} \{ #1 \}}
\newcommand{\mi}[1]{\mbox{\it #1}}
\newcommand{\tforall}[1]{\forall #1. \, }
\newcommand{\texists}[1]{\exists #1. \, }
\newcommand{\Next}{\raisebox{0.3ex}{$\scriptstyle\bigcirc$}}
\renewcommand{\next}[1]{\Next #1}
\newcommand{\tdelay}[2]{
    \IfEqCase{#2}{%
        {1}{\next{#1}}%
    }[{\Next^{#2} (#1)}]%
}%
\newcommand{\vars}{\mathcal{V}}
\newcommand{\unfldsg}[2]{\m{unfold}_{#1}(#2)}
\newcommand{\unfold}[1]{\unfldsg{\Sg_0}{#1}}
\newcommand{\rel}{\mathcal{R}}
\newcommand{\SD}{\mathcal{S}(\mathcal{D})}
\newcommand{\SDO}{\mathcal{S}(\mathcal{D}_0)}
\definecolor{verylightgray}{rgb}{.97,.97,.97}
\lstdefinelanguage{Solidity}{
	keywords=[1]{anonymous, assembly, assert, balance, break, call, callcode, case, catch, class, constant, continue, constructor, contract, debugger, default, delegatecall, delete, do, else, emit, event, experimental, export, external, false, finally, for, function, gas, if, implements, import, in, indexed, instanceof, interface, internal, is, length, library, log0, log1, log2, log3, log4, memory, modifier, new, payable, pragma, private, protected, public, pure, push, require, return, returns, revert, selfdestruct, send, solidity, storage, struct, suicide, super, switch, then, this, throw, transfer, true, try, typeof, using, value, view, while, with, addmod, ecrecover, keccak256, mulmod, ripemd160, sha256, sha3}, 
	keywordstyle=[1]\color{blue}\bfseries,
	keywords=[2]{address, bool, byte, bytes, bytes1, bytes2, bytes3, bytes4, bytes5, bytes6, bytes7, bytes8, bytes9, bytes10, bytes11, bytes12, bytes13, bytes14, bytes15, bytes16, bytes17, bytes18, bytes19, bytes20, bytes21, bytes22, bytes23, bytes24, bytes25, bytes26, bytes27, bytes28, bytes29, bytes30, bytes31, bytes32, enum, int, int8, int16, int24, int32, int40, int48, int56, int64, int72, int80, int88, int96, int104, int112, int120, int128, int136, int144, int152, int160, int168, int176, int184, int192, int200, int208, int216, int224, int232, int240, int248, int256, mapping, string, uint, uint8, uint16, uint24, uint32, uint40, uint48, uint56, uint64, uint72, uint80, uint88, uint96, uint104, uint112, uint120, uint128, uint136, uint144, uint152, uint160, uint168, uint176, uint184, uint192, uint200, uint208, uint216, uint224, uint232, uint240, uint248, uint256, var, void, ether, finney, szabo, wei, days, hours, minutes, seconds, weeks, years},	
	keywordstyle=[2]\color{teal}\bfseries,
	keywords=[3]{block, blockhash, coinbase, difficulty, gaslimit, number, timestamp, msg, data, gas, sender, sig, value, now, tx, gasprice, origin},	
	keywordstyle=[3]\color{violet}\bfseries,
	identifierstyle=\color{black},
	sensitive=false,
	comment=[l]{//},
	morecomment=[s]{/*}{*/},
	commentstyle=\color{gray}\ttfamily,
	stringstyle=\color{red}\ttfamily,
	morestring=[b]',
	morestring=[b]"
}
\newcommand{\ltrans}[1]{\xrightarrow{#1}}
\newcommand{\clo}[2]{\langle #1 \semi #2\rangle}
\newcommand{\cov}{+}
\newcommand{\conv}{-}
\newcommand{\biv}{\top}
\newcommand{\nonv}{\bot}
\newcommand{\ofv}{\mathrel{\mbox{\tt\#}}}
    \let\@internalcite\cite
    \def\cite{\def\citeauthoryear##1##2{##1, ##2}\@internalcite}
    \def\shortcite{\def\citeauthoryear##1{##2}\@internalcite}
    \def\@biblabel#1{\def\citeauthoryear##1##2{##1, ##2}[#1]\hfill}
\begin{document}

\title{Subtyping on Nested Polymorphic Session Types}

%
\author{Ankush Das\inst{1} \and
Henry DeYoung\inst{1} \and
Andreia Mordido\inst{2} \and 
Frank Pfenning\inst{1}}
\authorrunning{A. Das et al.}
\institute{Carnegie Mellon University, USA \and
LASIGE, Faculdade de Ci\^encias, Universidade de Lisboa, Portugal}

\maketitle

\begin{abstract}
The importance of subtyping to enable a wider range of well-typed programs is undeniable. However, the interaction between subtyping, recursion, and polymorphism is not completely understood yet.  In this work, we explore subtyping in a system of nested, recursive, and polymorphic types with a coinductive interpretation, and we prove that this problem is undecidable. Our results will be broadly applicable, but to keep our study grounded in a concrete setting, we work with an extension of session types with explicit polymorphism, parametric type constructors, and nested types. We prove that subtyping is undecidable even for the fragment with only internal choices and nested unary recursive type constructors. Despite this negative result, we present a subtyping algorithm for our system and prove its soundness. We minimize the impact of the inescapable incompleteness by enabling the programmer to seed the algorithm with subtyping declarations (that are validated by the algorithm). We have implemented the proposed algorithm in Rast and it showed to be efficient in various example programs.
\end{abstract}

\section{Introduction}
\label{sec:intro}

Subtyping is a standard feature in modern programming languages,
whether functional, imperative, or object-oriented.  The two principal
approaches to understanding the meaning of subtyping is via coercions
or as subsets.  Either way, subtyping allows more programs as
well-typed, programs can be more concise, and types can express more
informative properties of programs.

When it comes to the interaction between advanced features of type
systems, specifically recursive and polymorphic types, there are still
some gaps in our understanding of subtyping.  In this paper we analyze
a particular interaction of features, but as we explain below, we
contend that the lessons are broadly applicable.
\begin{itemize}
\item Recursive types and type constructors are \emph{equirecursive}
  (rather than isorecursive) and \emph{structural} (rather than
  generative).
\item Recursively defined type constructors may be \emph{nested}
  rather than being restricted to be regular (in the terminology
  of \cite{Solomon78popl}).
\item Types are interpreted \emph{coinductively} rather than
  inductively.
\item Polymorphism is \emph{explicit} rather than implicit.
\item Types are \emph{linear}.
\end{itemize}
We prove that subtyping for a language with even a small subset of
these features is \emph{undecidable}, including only internal choice
and nested unary recursive type constructors, but no function or
product types, type quantification, or type constructors with multiple
parameters.  This may be disheartening, but we also present an
incomplete algorithm that has performed well over a range of examples.
One of its features is that it is inherently extensible through
natural programmer declarations.  We provide a simple prototypical
example of how nested recursive types and subtyping can cooperate to
allow some interesting program properties to be checked and expressed.
Other examples of nested polymorphic types and applications in
functional programming can be found in the
literature~\cite{Bird98mpc,Okasaki98book,Hinze00jfp,Johann09hosc}.
Closely related issues arise in object-oriented programs with
path-dependent types~\cite{Amin16wadler,Mackay20popl}.  Recent
semantic investigations have demonstrated that nested types exhibit
many of the properties we come to expect from regular types, including
induction~\cite{Johann20fossacs} and parametricity
principles~\cite{Johann21arxiv}.

This paper continues an investigation of nested equirecursive
types~\cite{Das21esop}.  This prior work did not treat quantified
types and considered only \emph{type equality} rather than subtyping.
Remarkably, type equality is \emph{decidable}, although no practical
and complete algorithm is known.  In~\cite{Das21esop}, we also proposed a
practical, but incomplete algorithm, which provided some inspiration
for the algorithm here.  In experiments with nested recursive types
restricted to type equality we found that some programs became
prohibitively complex, since we needed to explicitly implement (often
recursive) coercions between subtypes.  This provided some motivation
for the generalizations present in this paper.  We found that the
additional complexities of subtyping are considerable, regarding both
pragmatics and theory.

Which of our language features above are essential for our
undecidability result and practical algorithm?  The nature of the
subtyping changes drastically if recursive types are \emph{generative}
(or \emph{nominal}), essentially blocking many applications of
subtyping.  However, the issues addressed in this paper arise again if
datasort refinements~\cite{Freeman91,Dunfield04popl,Davies05phd} are
considered.  Specifically, the interaction between polymorphism and
datasort refinements was identified as problematic~\cite{Skalka97ms}
and somewhat drastically restricted in SML
Cidre~\cite{Davies11github}.  Our results therefore apply to
refinement types in the presence of polymorphic constructors.

Regarding nesting, in the non-generative setting it seems difficult to
justify the exclusion of nesting.  If we disallow it and artificially
restrict subtyping, then it becomes decidable by standard subtyping
algorithms, whether types are interpreted
inductively~\cite{Ligatti17toplas} or
coinductively~\cite{Amadio93toplas,Brandt98fi,Gay05acta} with the standard notion
of \emph{constructor variance}~\cite{Igarashi02ecoop,Kennedy06tr,Altidor11pldi}.  
We also expect that, once nesting is permitted, there is no essential
difference between equi- and isorecursive subtyping.  We expect that it
will be easy to map isorecursive types to corresponding equirecursive
types while preserving subtyping, in which case our algorithm could be 
applied to isorecursive types.  But we leave the details to future work.

Our system of subtyping remains sound when certain types (for example,
purely positive recursive types) are considered inductively.  However,
it then suffers from an additional source of incompleteness by not
identifying all empty types~\cite{Ligatti17toplas}.  Since in
particular function types are almost always considered nonempty (in
essence, coinductively) this does not change our undecidability
result.  For a purely inductive fragment of the types this can also be
obtained directly by a reduction from the undecidability of the
language inclusion problem for deterministic pushdown
automata~\cite{Friedman76tcs} following
Solmon's construction~\cite{Solomon78popl}.

Linearity actually plays no role whatsoever in subtyping, and we use
it here only because coinductively defined linear types naturally
model binary session
types~\cite{Honda93concur,Gay10jfp,Caires10concur,Wadler12icfp}.
Session typed concurrent programs form the basis for our experimental
implementation in Rast~\cite{Das20fscd}.

However, because our type simulation is based on the observable
communication behavior of processes (like \cite{Gay05acta}), the choice of Rast as an underlying programming language is of secondary
interest.  We chose Rast
because its coinductive interpretation of types and bidirectional
type checker allow for easy experimentation.
Lastly, type safety for Rast with explicit polymorphism can be proved in a straightforward way
following our previous work~\cite{Das21esop}.

In summary, our main contributions are:
\begin{itemize}
	\item A definition of subtyping for session types with parametric 
	type constructors, explicit polymorphism, and nested types.
	\item A proof of undecidability of subtyping for even a small fragment 
	of the proposed nested polymorphic session types.
	\item A practical (but incomplete) algorithm to check subtyping and 
	its soundness.
	\item A sound extension of the subtyping algorithm through 
	programmer declarations, to minimize the impact of incompleteness.
	\item An implementation of the subtyping algorithm.
\end{itemize}

The remainder of the paper is organized as follows.  We provide an
overview and simple examples in Section~\ref{sec:overview}.  Our
language of types and criteria for their validity are presented in
Section~\ref{sec:types}.  Section~\ref{sec:subtyping} gives a semantic
definition of subtyping followed by an undecidability proof by
reduction from language inclusion for Basic Process Algebras.  In
Section~\ref{sec:subtp-alg}, we present our algorithm for
subtyping and its soundness proof.  Section~\ref{sec:implementation}
briefly sketches the implementation, followed by further examples in
Section~\ref{sec:examples}.  We make some comparisons with related work in
Section~\ref{sec:related-work} and then conclude with
Section~\ref{sec:conclusion}.

\section{Overview}
\label{sec:overview}

Understanding the interaction between polymorphism and subtyping is not a new  
concern. In this paper, we study the interaction of explicit polymorphism,
recursion, type nesting, and subtyping on (linear) session types~\cite{Honda93concur,Gay10jfp,Caires10concur}. 
However, nothing in our analysis 
depends on linearity, so our key results apply more broadly to any system of 
equirecursive, structural, coinductive types with nesting and
explicit polymorphism.

Consider a type of unary representations of natural numbers given by 
\begin{equation*}
  \mathsf{nat} = \ichoice{\mathbf{z} : \one, \mathbf{s} : \mathsf{nat}}
  \,.
\end{equation*}
Each $\mathsf{nat}$ is either $\mathbf{z}$ or $\mathbf{s}$ followed by a $\mathsf{nat}$.
As a session type, it is represented as an \emph{internal choice}
between two possible labels: $\mathbf{z}$ terminates the communication (represented by $\one$), 
and $\mathbf{s}$ recursively calls $\mathsf{nat}$.
The types of even and odd natural numbers can now be defined as
\begin{equation*}
  \mathsf{even} = \ichoice{\mathbf{z} : \one, \mathbf{s} : \mathsf{odd}}
  \quad\text{and}\quad
  \mathsf{odd} = \ichoice{\mathbf{s} : \mathsf{even}}
  \,.
\end{equation*}
As mentioned in the previous section, our recursive types and type constructors are equirecursive and structural, not generative.
As such, it becomes unavoidable that we ought to have $\mathsf{even}$ and $\mathsf{odd}$ as subtypes of type $\mathsf{nat}$, based on a subset interpretation of subtyping.

Moreover, type nesting is also unavoidable in the structural setting.
Consider two type definitions for lists that differ only in their type constructors' names:
\begin{equation*}
  \mathsf{List}[\alpha] = \ichoice{\mathbf{nil} : \one, \mathbf{cons} : \alpha \tensor \mathsf{List}[\alpha]} \text{ and }
  \mathsf{List}'[\alpha] = \ichoice{\mathbf{nil} : \one, \mathbf{cons} : \alpha \tensor \mathsf{List}'[\alpha]}
  \,.
\end{equation*}
Each $\mathsf{List}[\alpha]$ is either $\mathbf{nil}$ or a $\mathbf{cons}$ followed by an $\alpha$ and another $\mathsf{List}[\alpha]$.
As a session type, it is an internal choice between sending either the $\mathbf{nil}$ or $\mathbf{cons}$ labels.
If $\mathbf{cons}$ is chosen, then the a channel of type $\alpha$ is sent, after which the session continues at type $\mathsf{List}[\alpha]$.
In a generative setting, the types $\mathsf{List}[A]$ and $\mathsf{List}'[B]$ would necessarily be unrelated by subtyping -- generativity would ensure that $\mathsf{List}[-]$ would remain distinct from $\mathsf{List}'[-]$.
However, in the structural setting, $\mathsf{List}[A]$ ought to be a subtype of $\mathsf{List}'[B]$ whenever $A$ is a subtype of $B$, again based on a subset interpretation of subtyping -- the two types differ only in the type constructor name, after all.
Moreover, $\mathsf{List}[\mathsf{List}[\mathsf{even}]]$ ought to be a subtype of $\mathsf{List}'[\mathsf{List}'[\mathsf{nat}]]$, and so deciding subtyping relationships between such nested types is inescapable.

Using nested types, we can define type constructors
\begin{equation*}
  \mathsf{Nil} = \ichoice{\mathbf{nil} : \one}
  \quad\text{and}\quad
  \mathsf{Cons}[\alpha,\kappa]= \ichoice{\mathbf{cons} : \alpha \tensor \kappa}
  \,.
\end{equation*}
Both $\mathsf{Nil}$ and $\mathsf{Cons}[A,L]$ are subtypes of $\mathsf{List}[A]$ if $L$ is a subtype of $\mathsf{List}[A]$.
The subtype $\mathsf{Nil}$ characterizes the empty list, whereas $\mathsf{Cons}[A,\mathsf{Cons}[A,\mathsf{Nil}]]$ characterizes lists of length $2$, for example.  (Roughly speaking, $\mathsf{Cons}^n[A,\mathsf{Nil}]$ characterizes lists of length $n$.)

Variances will be inferred for all parameters used in type constructors.
For example, in the type constructor $\mathsf{List}[\alpha]$ above, the type parameter $\alpha$ is covariant because it occurs in only postive positions.
On the other hand, the type $\mathsf{List}[\beta] \lolli \mathsf{List}[\alpha]$ is covariant in $\alpha$ but contravariant in $\beta$ because $\alpha$ and $\beta$ appear in positive and negative positions, respectively.
This means that, for example, $\mathsf{List}[\mathsf{nat}] \lolli \mathsf{List}[\mathsf{nat}]$ is a subtype of $\mathsf{List}[\mathsf{even}] \lolli \mathsf{List}[\mathsf{nat}]$ -- the former type is sufficient anywhere that the latter is required.

As a further example of variances, the type constructor $\mathsf{Seg}[\alpha]$ for list segments given by
\begin{equation*}
  \mathsf{Seg}[\alpha] = \mathsf{List}[\alpha] \lolli \mathsf{List}[\alpha]
\end{equation*}
has both positive and negative occurrences of $\alpha$, owing to the presence of $\lolli$. 
This makes $\alpha$ bivariant in $\mathsf{Seg}[\alpha]$, and it means that, for example,  there is no subtyping relationship between $\mathsf{Seg}[\mathsf{nat}]$ and $\mathsf{Seg}[\mathsf{even}]$ despite $\mathsf{even}$ being a subtype of $\mathsf{nat}$.

Lastly, our language of types includes explicit polymorphic quantifiers $\exists x.\,A$ and $\forall x.\,A$.
We can use the existential quantifier to describe heterogeneous lists:
\begin{equation*}
  \mathsf{HList} = \ichoice{\mathbf{nil} : \one , \mathbf{cons} : \exists x.\, (x \tensor \mathsf{HList})} \,.
\end{equation*} 
Like each $\mathsf{List}[\alpha]$, each $\mathsf{HList}$ is either $\mathbf{nil}$ or a $\mathbf{cons}$; the difference from the type $\mathsf{List}[\alpha]$ is in the existential quantification over the type of data in each $\mathbf{cons}$.
As a session type, $\mathsf{HList}$ types a process that sends either $\mathbf{nil}$ or $\mathbf{cons}$.
If $\mathbf{cons}$ is sent, then some type $T$ and a channel of type $T$ are sent, after which the session continues at type $\mathsf{HList}$ (for the list's tail).
We can even define type constructors
\begin{equation*}
  \mathsf{HNil} = \ichoice{\mathbf{nil} : \one}
  \quad\text{and}\quad
  \mathsf{HCons}[\kappa]= \ichoice{\mathbf{cons} : \exists x.\, (x \tensor \kappa)}
  \,.
\end{equation*}
Our notion of subtyping is such that both $\mathsf{HNil}$ and $\mathsf{HCons}[L]$ are subtypes of $\mathsf{HList}$ if $L$ is a subtype of $\mathsf{HList}$.
However, the explicit polymorphic quantifiers induce communication, so we do not have $\mathsf{Cons}[A,L]$ as a subtype of $\mathsf{HList}$ for any type $A$, even if $L$ is a subtype of $\mathsf{HList}$.

\section{Description of Types}
\label{sec:types}

The underlying base system of session types is derived from a Curry-Howard
interpretation~\cite{Caires10concur,Caires16mscs} of intuitionistic linear logic
\cite{Girard87tapsoft}. Das et al.\ extended this type system with 
parametric type constructors~\cite{Das21esop}. The
\emph{nested polymorphic session types} we propose are also 
endowed with explicit polymorphism, variances, and subtyping.

\subsection{Syntax}
We describe \emph{nested polymorphic session types}, their operational interpretation
and continuation types in Figure~\ref{fig:types}.
\begin{figure}
	\[\arraycolsep=4.5pt
  \begin{array}{lclll}
    A,B,C & ::= & \ichoice{\ell : A_\ell}_{\ell \in L} & \mbox{send label $k \in L$} & \mbox{continue at type $A_k$} \\
          & \mid & \echoice{\ell : A_\ell}_{\ell \in L} & \mbox{receive label $k \in L$} & \mbox{continue at type $A_k$} \\
          & \mid & A \tensor B & \mbox{send channel $a : A$} & \mbox{continue at type $B$} \\
          & \mid & A \lolli B & \mbox{receive channel $a : A$} & \mbox{continue at type $B$} \\
          & \mid & \one & \mbox{send $\m{close}$ message} & \mbox{no continuation} \\
          & \mid & \texists{x} A & \mbox{send type $B$} & \mbox{continue at type $A[B/x]$} \\
          & \mid & \tforall{x} A & \mbox{receive type $B$} & \mbox{continue at type $A[B/x]$} \\
          & \mid & x & \mbox{quantified variable} \\
          & \mid & \alpha & \mbox{type parameter} \\
          & \mid & V[\theta] & \mbox{defined type name} \\
  \end{array}
\]
\vspace*{-1em}
\caption{Description of nested polymorphic session types, their operational
semantics and continuation types.}
\vspace*{-1em}
\label{fig:types}
\end{figure}

The basic type operators have the usual interpretation:
the \emph{internal choice} operator $\ichoice{\ell \colon A_{\ell}}_{\ell\in L}$ 
selects a branch with label $k \in L$ with corresponding
continuation type $A_k$; the \emph{external choice} operator 
$\echoice{\ell \colon A_{\ell}}_{\ell\in L}$ offers a choice
with labels $\ell \in L$ with corresponding continuation types
$A_{\ell}$; the \emph{tensor} operator $ A \tensor B$ 
represents the channel passing type that consists of sending a
channel of type $A$ and proceeding with type $B$;
dually, the \emph{lolli} operator $A \lolli B$ consists of 
receiving a channel of type $A$ and continuing with 
type $B$; the \emph{terminated session} $\one$ is
the operator that closes the session. 

We also have two explicit \emph{type quantifier operators}.
The existential type $\texists{x} A$ is interpreted as sending an arbitrary
well-formed type $B$ and continuing as type $A[B/x]$. Dually, the universal
type $\tforall{x} A$ receives a type $B$ and continues at type $A[B/x]$.

Nested polymorphic session types support \emph{parametrized type definitions} to
define new type names. In a type definition, each \emph{type parameter} is 
assigned a variance. Substitutions $\theta$ for these parameters need
to comply with the prescribed variances. Below we describe variances, 
substitutions, and signatures.
\[
  \begin{array}{llcl}
    \mbox{Variables} & \vars & ::= & \cdot \mid \vars, \alpha \mid \vars, x \\
    \mbox{Variance} & \xi & ::= & \cov \mid \conv \mid \biv \mid \nonv \\
    \mbox{Variances} & \Xi & ::= & \cdot \mid \Xi, \alpha \ofv \xi \\
    \mbox{Substitution} & \theta & ::= & \cdot \mid \theta, A/\alpha \\
    &\sigma & ::= & \cdot \mid \sigma, A/x \\
    \mbox{Signature} & \Sg & ::= & \cdot \mid \Sg, V[\Xi] = A \smallskip \\
  \end{array}
\]

We distinguish \emph{quantified variables} $x$ from \emph{type parameters}
$\alpha$ and, in general, refer to them as \emph{variables}. A \emph{type name} 
$V$ is defined according to a \emph{type definition} $V[\Xi]=A$ in 
\emph{signature} $\Sg$, that is parametrized  
by a sequence of distinct \emph{type parameters} $\overline\alpha$ that the
type $A$ can refer to. Each type parameter $\alpha$ in $\Xi$ has variance $\xi$, 
establishing the position of the occurrences of 
$\alpha$ in $A$: covariant ($\cov$), contravariant ($\conv$), 
bivariant ($\biv$), or nonvariant ($\nonv$). 
We instantiate a definition $V[\Xi] = A$ by writing $V[\theta]$,
where $\theta$ is a substitution for the type parameters in $\Xi$. 
We distinguish substitutions for type parameters, $\theta$, from
substitutions for quantified variables, $\sigma$, because the former
require the validation of variances. 
The set of \emph{free variables} in type $A$ refer to type variables
that occur freely in $A$.
Types without any free variables are called \emph{closed types}.
Any type not of the form $V[\theta]$ is called \emph{structural}.

\subsection{Variances}


We define an implication relation on variances. The implication relation 
constitutes a partial ordering and is defined by the following rules:
\[
  \begin{array}{c}
    \infer[\m{refl}]
    {\xi \leq \xi}{}
    \qquad
    \infer[\bot]
    {\nonv \leq \xi}{}
    \qquad
  \infer[\top]
  {\xi \leq \biv }
  {}
 \end{array}
\]

The least upper bound of this lattice is $\biv$---if a type name $V$ is covariant, 
contravariant, or nonvariant in a type parameter $\alpha$, then it 
also bivariant---whereas
the greatest lower bound is $\nonv$.
The relation $\leq$ on variances can be shown to be transitive.

\begin{wrapfigure}{r}{0.3\textwidth}
\qquad
$ 
  \begin{array}{c||c|c|c|c}
   \xi \mid \xi' & \nonv & \cov & \conv & \biv \\ \hline\hline
  \nonv  & \nonv & \nonv  & \nonv  & \nonv \\ \hline
  \cov & \nonv & \cov & \conv & \biv \\ \hline
  \conv & \nonv & \conv & \cov & \biv \\ \hline
  \biv  & \nonv & \biv  & \biv  & \biv
  \end{array}
$
\end{wrapfigure}

The \emph{nesting} of types is echoed in a \emph{nesting operator} on variances, as defined in  the adjacent table.
Observe  that: nesting with $\cov$ preserves  
variance; nesting with $\conv$ converts covariance to contravariance
and vice versa; nesting with $\biv$ converts covariance and
contravariance to bivariance; and $\nonv$ is the absorbing element 
of this operator. The nesting operator can be naturally extended to a set of variances
as follows:
\begin{mathpar}
	    (\cdot) \mid \xi  = (\cdot) \and
    (\Xi, \alpha \ofv \xi) \mid \xi'  = (\Xi \mid \xi'), \alpha \ofv (\xi \mid \xi')
\end{mathpar}

To simplify notation, we sometimes use $\lnot \xi$ as an abbreviation for $\conv \mid \xi$.

\begin{lemma}[Properties of nesting]\
\begin{itemize}
\item Commutativity: $\xi \mid \xi' = \xi' \mid \xi$ 
\item Associativity: $(\xi_1 \mid \xi_2) \mid \xi_3 = \xi_1 \mid (\xi_2 \mid \xi_3)$
\item Monotonicity: If $\xi \leq \xi'$ then $\xi \mid \zeta \leq \xi' \mid \zeta$
  and if $\zeta \leq \zeta'$ then $\xi \mid \zeta \leq \xi \mid \zeta'$
\end{itemize}	
\end{lemma}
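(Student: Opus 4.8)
The plan is to observe that the four variances, equipped with the nesting operator $\mid$ and the ordering $\leq$, form a structure isomorphic to something entirely concrete: the powerset of the two‑element set of signs $\{{+},{-}\}$. I would fix the bijection $\nonv \mapsto \emptyset$, $\cov \mapsto \{{+}\}$, $\conv \mapsto \{{-}\}$, $\biv \mapsto \{{+},{-}\}$, and then check — against the table defining $\mid$ and against the three rules defining $\leq$ — that under this bijection (i) $\xi \mid \xi'$ corresponds to the pointwise product $\{\, s \cdot s' : s \in \xi,\ s' \in \xi' \,\}$, where $\cdot$ is ordinary multiplication of signs (so $({+})\cdot({+}) = ({-})\cdot({-}) = {+}$ and $({+})\cdot({-}) = {-}$), and (ii) $\xi \leq \xi'$ corresponds to set inclusion. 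Establishing (i) and (ii) is a finite check of the $16$ table entries and the three inference rules, and it is the only place where the specific shape of the table is used.

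Once the correspondence is in hand, all three items are immediate. Commutativity of $\mid$ follows from commutativity of sign multiplication, since the formula in (i) is symmetric in $\xi$ and $\xi'$. Associativity follows likewise from associativity of sign multiplication: both $(\xi_1 \mid \xi_2) \mid \xi_3$ and $\xi_1 \mid (\xi_2 \mid \xi_3)$ unfold to $\{\, s_1 \cdot s_2 \cdot s_3 : s_i \in \xi_i \,\}$. For monotonicity, if $\xi \subseteq \xi'$ then $\{\, s \cdot t : s \in \xi,\ t \in \zeta \,\} \subseteq \{\, s \cdot t : s \in \xi',\ t \in \zeta \,\}$, i.e.\ $\xi \mid \zeta \leq \xi' \mid \zeta$; the second half of monotonicity is then obtained from the first by commuting $\mid$. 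Note that transitivity of $\leq$ is not needed for any of this.

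If one prefers to avoid sign sets, each item can instead be proved by direct case analysis on the (at most four) values of the variances involved. Commutativity is then $6$ off‑diagonal comparisons; associativity is $64$ cases, but they collapse fast: if some $\xi_i = \nonv$ both sides are $\nonv$, if some $\xi_i = \cov$ it acts as a unit and the case reduces to a smaller one, and the remaining cases lie in $\{\conv,\biv\}$, where $\biv$ is absorbing. Monotonicity splits on whether $\xi = \xi'$ (by reflexivity), $\xi = \nonv$ (so $\xi \mid \zeta = \nonv$, which is below everything), or $\xi' = \biv$ (so $\xi' \mid \zeta \in \{\nonv,\biv\}$, and $\xi \mid \zeta \leq \xi' \mid \zeta$ follows using $\nonv \mid \zeta = \nonv$).

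There is no genuine obstacle here: every item reduces to a mechanical finite verification. The only thing to be careful about is getting the sign‑set encoding to match the published table exactly — equivalently, not miscounting or omitting a case in the direct argument — since everything downstream is purely routine.
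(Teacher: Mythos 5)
Your proposal is correct; both routes you describe go through. The paper's own proof is exactly your fallback: a direct case analysis on the variances using the definition of the nesting table, with monotonicity additionally unfolding the three rules defining $\leq$. Your primary argument is a genuinely different decomposition: you verify once that the bijection $\nonv \mapsto \emptyset$, $\cov \mapsto \{{+}\}$, $\conv \mapsto \{{-}\}$, $\biv \mapsto \{{+},{-}\}$ carries $\mid$ to the pointwise product of sign sets and $\leq$ to inclusion (both checks are easily confirmed against the table and the rules $\m{refl}$, $\bot$, $\top$, which generate exactly the nine pairs of the inclusion order on $\mathcal{P}(\{{+},{-}\})$), and then all three items fall out of commutativity and associativity of sign multiplication and monotonicity of direct images under inclusion. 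What this buys you is a single point of contact with the concrete table rather than $16+64$ scattered cases, plus an explanation of \emph{why} the lemma holds (the variance lattice is the powerset algebra of signs); what the paper's brute-force check buys is zero overhead in setting up and verifying the encoding. Your observation that transitivity of $\leq$ is not needed is also accurate, and your three-way case split for monotonicity ($\xi = \xi'$, $\xi = \nonv$, or $\xi' = \biv$) is exhaustive precisely because those are the only ways a pair can enter $\leq$.
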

\begin{proof}
	Proofs by case analysis on the variances, 
	using the definition of the nesting operator. Monotonicity also uses
	the definition of the partial order $\leq$.
\end{proof}

\subsection{Signatures, types, and substitutions}
\label{subsec:signatures_types_subst}

Now we present the criteria for valid signatures,
types, and substitutions. 

\paragraph{Valid signatures}
The judgment for valid signatures
is written as 
 $\vdash \Sg\; \mi{valid}$ 
 and is defined by the following rules:
\[
  \begin{array}{c}
    \infer[\m{sig}]
    {\vdash \Sigma_0 \; \mi{valid}}
    {\vdash_{\Sigma_0} \Sigma_0\; \mi{valid}}
    \qquad
    \infer[(\cdot)]
    {\vdash_{\Sigma_0} (\cdot)\; \mi{valid}}
    {\mathstrut}
    \\\\
    \infer[\m{def}]
    {\vdash_{\Sigma_0} (\Sigma, V[\Xi] = A)\; \mi{valid}}
    {\vdash_{\Sigma_0} \Sigma\; \mi{valid} & \cdot \semi \Xi \vdash_{\Sigma_0} A \ofv \cov
    & A \neq V'[\theta]}
  \end{array}
\]

Type definitions may be mutually recursive and each parameter of
each defined type constructor is assigned a variance that is stored in $\Xi$. 
In a \emph{valid signature}, all definitions $V[\Xi] = A$ are
defined over \emph{structural} types 
that are \emph{valid} in a context of variance $\cov$.
(The former restriction is
usually called \emph{contractivity}~\cite{Gay05acta}.)

We take an \emph{equirecursive} view of type definitions, which means
that unfolding a type definition does not require communication.
More concretely, the type $V [\theta]$ is considered equal to its unfolding
$A[\theta]$.
(We expect that we can easily adapt our definitions to an \emph{isorecursive}
view~\cite{Lindley16icfp,Derakhshan19corr} with explicit unfold messages, but we leave the details as future work.)
All type names $V$ occurring in a valid signature
must be defined, and all type parameters defined in a valid
type definition must be distinct.
Furthermore, for a valid definition $V[\Xi] = A$, the free variables occurring
in $A$ must be contained in $\Xi$.

\paragraph{Valid types}
The judgment for valid types
is written as 
$\vars \semi \Xi \vdash_{\Sigma_0} A \ofv \xi$ and 
is defined over a valid signature $\Sg_0$.
$\vars$ stores the free variables in $A$ and $\Xi$ stores the
assignments of type parameters to variances.
This judgment expresses that if $A$ appears in a context of variance
$\xi$ then all the type variables $\beta \ofv \zeta$ in $\Xi$ will
occur only as prescribed by $\zeta$.  For example, if $\xi = \conv$
and $\beta \ofv \conv$ then all occurrences of $\beta$ in $A$ must
be in covariant or nonvariant positions.  If $\xi = \nonv$ then no
requirement is imposed on any occurrences. We elide the subscript 
$\Sigma_0$ in the rules presented below.

Internal and external choices are covariant in the continuation 
types, \emph{lolli} is contravariant in the first component and 
covariant in the second component, and tensor is covariant in 
both components.
\[
  \begin{array}{c}
 	\infer[\ichoiceop]
    {\vars\semi \Xi \vdash \ichoice{\ell : A_\ell}_{\ell \in L} \ofv \xi}
    {\vars\semi \Xi \vdash A_\ell \ofv \xi \qquad (\forall \ell \in L)}
    \qquad
    \infer[\echoiceop]
    {\vars\semi\Xi \vdash \echoice{\ell : A_\ell}_{\ell \in L} \ofv \xi}
    {\vars\semi\Xi \vdash A_\ell \ofv \xi \qquad (\forall \ell \in L)}
      \end{array}
\]
\[
  \begin{array}{c}
    \infer[\lolli]
    {\vars\semi\Xi \vdash A_1 \lolli A_2 \ofv \xi}
    {\vars\semi\Xi \vdash A_1 \ofv \lnot \xi
    & \vars\semi\Xi \vdash A_2 \ofv \xi}
    \qquad
    \infer[\tensor]
    {\vars\semi\Xi \vdash A_1 \tensor A_2 \ofv \xi}
    {\vars\semi\Xi \vdash A_1 \ofv \xi
    & \vars\semi\Xi \vdash A_2 \ofv \xi}  
  \end{array}
\]

Explicitly quantified types are covariant in the continuation
type. Explicitly quantified variables do not carry any variance
information, so they are kept in the set of variables $\vars$.
\[
    \infer[\exists^x]
    {\vars\semi\Xi \vdash \texists{x} A \ofv \xi}
    {\vars, x\semi\Xi \vdash A \ofv \xi } 
	\qquad
    \infer[\forall^x]
    {\vars\semi\Xi \vdash \tforall{x} A \ofv \xi}
    {\vars, x\semi\Xi \vdash A \ofv \xi } 
\]

The terminated session $\one$ and quantified variables $x$ are 
valid in a context with any variance, whereas type parameters
$\alpha \ofv \xi'\in \Xi$ are valid in contexts of variance at
most $\xi'$. For instance, $\cdot\semi \alpha\ofv\biv \vdash \alpha \ofv \cov$.
\[
    \infer[\one]
    {\vars\semi\Xi \vdash \one \ofv \xi}
    {} 
    \qquad
    \infer[\m{var}]
    {\vars\semi\Xi \vdash x \ofv \xi}
    {x \in \vars} 
    \qquad
    \infer[\m{par}]
    {\vars\semi\Xi \vdash \alpha \ofv \xi}
    {\alpha \ofv \xi' \in \Xi & \xi \leq \xi'}
\]

We instantiate a definition $V[\Xi_V] = A_V$ by writing $V[\theta]$,
where $\theta$ is a \emph{valid substitution} for type parameters in $\Xi_V$.
Each type in $\theta$ must be valid according to the variance 
in the context of $V[\theta]$ and by $\Xi_V$.  We apply the
nesting operator to combine these variances.
\[
      \infer[\m{def}]
      {\vars\semi\Xi \vdash V[\theta] \ofv \xi}
      {V[\Xi_V] = A_V \in \Sigma_0
    & \vars \semi \Xi \vdash \theta \ofv (\Xi_V \mid \xi)}
\]

\paragraph{Valid substitutions}

The judgment for valid substitutions
is written as 
$\vars \semi \Xi \vdash_{\Sigma_0} \theta \ofv \Xi$ and
is defined over a valid signature $\Sg_0$, whose
reference we omit. Again, $\vars$ is a set of variables and $\Xi$ is a set of
type parameters and their corresponding variances.
This judgment expresses that a substitution $\theta$ is valid on a set of variances
$\Xi$ if any type variable $\alpha$ in $\Xi$ is substituted for a type $A$ with
(at least) the same variance.

\[
  \begin{array}{c}
    \infer[(\cdot)]
    {\vars \semi \Xi \vdash (\cdot) \ofv (\cdot)}
    {\mathstrut}
    \qquad
    \infer[\m{subs}]
    {\vars \semi \Xi \vdash (\theta, A/\alpha) \ofv (\Xi_\theta, \alpha \ofv \xi)}
    {\vars \semi \Xi \vdash \theta \ofv \Xi_{\theta}
    & \vars \semi \Xi \vdash A \ofv \xi}
  \end{array}
\]

\paragraph{Properties of variance}
We now identify some properties of variance on nested
polymorphic session types and revisit some examples. 
Additional properties can be found in the 
supplementary material.

\begin{lemma} 
\label{lem:properties_variance_valid_types}
The following properties hold.
\begin{enumerate}
\item If $\vars\semi \Xi \vdash A \ofv \xi$ and $\xi' \leq \xi$ then $\vars\semi \Xi \vdash A \ofv \xi'$.
\item If $\vars\semi \Xi \vdash A \ofv \xi$ and $\vars\semi \Xi, \alpha \ofv \xi \vdash C \ofv \zeta$
  then $\vars\semi \Xi \vdash C[A/\alpha] \ofv \zeta$.
\item If $\vars\semi \Xi \vdash \theta \ofv \Xi'$ and $\vars\semi\Xi' \vdash C \ofv \zeta$
then $\vars\semi \Xi \vdash C[\theta] \ofv \zeta$. 
\item If $\vars\semi \Xi \vdash A \ofv \xi$ then $\vars\semi\Xi \mid \xi' \vdash A \ofv \xi \mid \xi'$.
\item If $\vars\semi \Xi \vdash V[\theta'] \ofv \xi$ and $V[\Xi'] = A' \in \Sigma_0$
then $\vars\semi \Xi \vdash A'[\theta'] \ofv \xi$.
\end{enumerate}
\end{lemma}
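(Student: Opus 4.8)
The plan is to prove all five items simultaneously, or rather in a dependency order, by structural induction on the relevant derivations. Items 1--5 are not independent: item 4 is the key structural fact about the nesting operator acting on variance contexts, and items 2, 3, and 5 all rely on it. I would prove item 1 first, then item 4, then items 2 and 3 together (they are essentially the same statement, with item 3 being the iterated version of item 2 along a substitution $\theta$), and finally item 5 as an easy consequence.

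\textbf{Item 1} follows by a straightforward induction on the derivation of $\vars\semi\Xi\vdash A\ofv\xi$. Most cases are immediate because the rules are parametric in $\xi$: for $\ichoiceop$, $\echoiceop$, $\one$, $\m{var}$ the premises are either about the same (weakened) $\xi$ or impose no constraint. The interesting cases are $\m{par}$, where I use transitivity of $\leq$ on variances ($\xi'\leq\xi$ and $\xi\leq\xi''$ give $\xi'\leq\xi''$); $\lolli$, where I must observe that $\xi'\leq\xi$ implies $\lnot\xi'\leq\lnot\xi$, i.e. $\conv\mid\xi'\leq\conv\mid\xi$, which is exactly monotonicity of nesting from the earlier lemma; and $\m{def}$, where I need $\Xi_V\mid\xi'\leq\Xi_V\mid\xi$ pointwise, again by monotonicity of nesting, and then apply the induction hypothesis to the substitution judgment (so in fact item 1 should be proved simultaneously for the substitution judgment as well, or one notes the substitution rules simply thread the constraint through to the types).

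\textbf{Item 4}, $\vars\semi\Xi\vdash A\ofv\xi$ implies $\vars\semi\Xi\mid\xi'\vdash A\ofv\xi\mid\xi'$, is again by induction on the derivation. The covariant-continuation cases ($\ichoiceop,\echoiceop,\tensor$, the quantifiers, $\one$, $\m{var}$) are routine. The case $\lolli$ uses associativity and commutativity of nesting: from $A_1\ofv\lnot\xi$ the IH gives $A_1\ofv(\lnot\xi)\mid\xi' = (\conv\mid\xi)\mid\xi' = \conv\mid(\xi\mid\xi') = \lnot(\xi\mid\xi')$, as required. The case $\m{par}$ uses monotonicity of nesting: from $\xi\leq\xi'_0$ in $\Xi$ we get $\xi\mid\xi'\leq\xi'_0\mid\xi'$, and $\alpha\ofv(\xi'_0\mid\xi')$ is what appears in $\Xi\mid\xi'$. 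The case $\m{def}$ is the one to watch: from $\theta\ofv(\Xi_V\mid\xi)$ I want $\theta\ofv(\Xi_V\mid(\xi\mid\xi'))$ over the context $\Xi\mid\xi'$; by associativity $\Xi_V\mid(\xi\mid\xi') = (\Xi_V\mid\xi)\mid\xi'$, so this is exactly the IH applied to the substitution judgment — so item 4 must also be stated and proved simultaneously for substitutions, which is immediate from its rule $\m{subs}$ and the type-level IH.

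\textbf{Items 2, 3, and 5.} Item 2 (single-variable substitution) goes by induction on the derivation of $\vars\semi\Xi,\alpha\ofv\xi\vdash C\ofv\zeta$. The only case touching $\alpha$ is $\m{par}$ on $C=\alpha$: then $\zeta\leq\xi$, and we must show $\vars\semi\Xi\vdash A\ofv\zeta$, which follows from the hypothesis $\vars\semi\Xi\vdash A\ofv\xi$ together with item 1. For $C$ a parameter $\beta\neq\alpha$, or $C=x$, $C=\one$, the substitution is vacuous. For $C=V[\theta_0]$ the substitution pushes into $\theta_0$ and we invoke the IH on the substitution judgment. The subtle point is the $\lolli$ case, where the premise for the first component is at variance $\lnot\zeta$ rather than $\zeta$, so the statement of item 2 needs to be general in $\zeta$ (which it is) and, when $C=\alpha$ occurs contravariantly, we again lean on item 1. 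Item 3 is then obtained by iterating item 2 along the substitution $\theta$, using the definition of valid substitution to supply the per-variable hypotheses $\vars\semi\Xi\vdash A_i\ofv\xi_i$; alternatively, a direct induction on the length of $\theta$ combined with item 2. Item 5 is immediate: from $\vars\semi\Xi\vdash V[\theta']\ofv\xi$ the rule $\m{def}$ gives $\vars\semi\Xi\vdash\theta'\ofv(\Xi'\mid\xi)$, and validity of the signature gives $\cdot\semi\Xi'\vdash A'\ofv\cov$; by item 4 this yields $\cdot\semi\Xi'\mid\xi\vdash A'\ofv\cov\mid\xi = \xi$, and then item 3 with the substitution $\theta'$ (after weakening the empty variable context to $\vars$) gives $\vars\semi\Xi\vdash A'[\theta']\ofv\xi$.

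\textbf{Main obstacle.} The real work is bookkeeping: every item must be stated and proved simultaneously for both the type-validity judgment and the substitution-validity judgment, since the $\m{def}$ rule ties them together, and the $\lolli$ rule forces each statement to remain general in the ambient variance $\xi$ (and robust under the flip $\xi\mapsto\lnot\xi$). Once that generality is built in, each case reduces to one of the three algebraic facts about nesting from the preceding lemma — commutativity/associativity for the $\lolli$ and $\m{def}$ cases, monotonicity for $\m{par}$ and $\m{def}$ — together with transitivity of $\leq$ and item 1 where a parameter is hit by a substitution at a weaker variance. No case requires a nested induction beyond the obvious structural one.
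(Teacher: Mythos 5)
Your proposal is correct and follows essentially the same route as the paper, which proves (1) by induction using transitivity of $\leq$ and monotonicity of nesting, (2) by induction using (1), (3) by induction using (2), (4) by induction using associativity of nesting, and (5) by inversion using (4), (3), and $\cov \mid \xi = \xi$. Your write-up simply fills in the case analyses and makes explicit the (correct) observation that each item must be proved simultaneously for the substitution judgment.
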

\begin{proof}
(1) Proof by induction, using transitivity of $\leq$ and monotonicity of nesting.
(2) Proof by induction, using (1).
(3) Proof by induction, using (2).
(4) Proof by induction, using associativity of nesting.
(5) Proof by inversion, using (4), (3), and $\cov \mid \xi = \xi$.
\end{proof}

\begin{example}
\label{ex:seg_valid}
Consider a signature composed of definitions for 
the type of \emph{list segments} 
presented in~\autoref{sec:overview}:
\[
\begin{array}{llll}
	\Sg_0 &=& \{&\mathsf{List}[\alpha\ofv\cov] = \ichoice{\mathbf{nil} : \one, \mathbf{cons} : \alpha \tensor \mathsf{List}[\alpha]},\\
	&&&\mathsf{Seg}[\alpha\ofv \biv] = \mathsf{List}[\alpha] \lolli \mathsf{List}[\alpha]\,\}
\end{array}
\]
To prove that $\Sg_0$ is valid, we initiate the validity check of $\mathsf{List}[\alpha\ofv \cov]$ with
$\cdot \semi \alpha\ofv\cov \vdash_{\Sigma_0} \ichoice{\mathbf{nil} : \one, \mathbf{cons} : \alpha \tensor \mathsf{List}[\alpha]} \ofv \cov$.
We then explore each branch using the $\ichoiceop$ rule.
Since the $\mb{nil}$ branch is trivial, we show the $\mb{cons}$ branch.
Here, we check $\cdot \semi \alpha\ofv\cov \vdash_{\Sigma_0} \alpha \tensor \mathsf{List}[\alpha] \ofv \cov$,
which, in turn, checks $\cdot \semi \alpha\ofv\cov \vdash_{\Sigma_0} \alpha \ofv \cov$
and $\cdot \semi \alpha\ofv\cov \vdash_{\Sigma_0} \mathsf{List}[\alpha] \ofv \cov$.
The former follows from $\m{par}$ rule, while the latter reduces
to checking $\cdot \semi \alpha\ofv\cov \vdash_{\Sigma_0} \alpha \ofv (\cov \mid \cov)$, and then $\m{par}$ rule applies.

To verify validity of $\mathsf{Seg}[\alpha \ofv \biv]$, we check
$\cdot \semi \alpha\ofv\biv \vdash_{\Sigma_0} \mathsf{List}[\alpha] \lolli \mathsf{List}[\alpha] \ofv \cov$.
Intuitively, this should follow since $\alpha$ has variance $\biv$ in the
context, and therefore can occur covariantly and contravariantly.
Formally, this reduces (by the $\lolli$ rule) to checking $\cdot \semi \alpha\ofv\biv \vdash_{\Sigma_0} \mathsf{List}[\alpha] \ofv \cov$ and $\cdot \semi \alpha\ofv\biv \vdash_{\Sigma_0} \mathsf{List}[\alpha] \ofv \conv$,
which follows by the $\m{def}$ rule and by property (1) of Lemma~\ref{lem:properties_variance_valid_types}.
\end{example}

\section{Subtyping}
\label{sec:subtyping}

The ability to verify subtyping is
paramount to enable the specification of more expressive (well-typed) programs.
However, the combination of recursion, polymorphism, and subtyping
has proven to be challenging.
The subtyping relation for session types was proposed 
by Gay and Hole~\cite{Gay05acta} to enhance the flexibility of the type system.
We say that $A$ is a subtype of $B$, written $A \leq B$, if every behavior 
permitted by $A$ is also permitted by $B$. However, also for 
session types, the combination of 
subtyping and explicit polymorphic quantifiers has proven difficult~\cite{Gay08}.
In this section, we provide the definition of subtyping for nested
polymorphic session types and build on the 
legacy of the undecidability of the inclusion problem for simple
languages~\cite{Friedman76tcs} to prove that subtyping is also undecidable.

Our results show that even for a very simple fragment
of nested polymorphic session types, subtyping is 
undecidable. This fragment is composed of types defined
\emph{mutual recursively} through \emph{unary type definitions}
involving non-deterministic (labelled) choices. 
In our session-typed setting, that means \emph{nested} session types defined through \emph{unary} type definitions whose
type parameter is always assigned positive variance; these definitions only
provide \emph{internal choices}. 

\subsection{Subtyping definition}

We start by defining the notion of \emph{unfolding} on nested polymorphic session types.
We define $\unfold{A}$ recursively on the structure
  of $A$, according to the following rules:
  \begin{mathpar}
    \infer[\m{def}]
    {\unfold{V [\theta]} = B[\theta]}
    {V[\Xi'] = B \in \Sg_0 
    & \vars\semi\Xi \vdash \theta \ofv \Xi'} 
    \and
    \infer[\m{str}]
    {\unfold{A} = A}
    {}
  \end{mathpar}

\paragraph{Simulation and variance-based type relations.} The subtyping
relation for closed types is defined based on the notion of \emph{type simulation}.
Let $\mi{Type}$ denote the set of closed types.

\begin{definition}\label{def:subtp_rel}
  A relation $\rel \subseteq
  \mi{Type} \times \mi{Type}$ is a type simulation if $(A, B) \in
  \rel$ implies the following conditions:
  \begin{itemize}
    \item If $\unfold{A} = \ichoice{\ell : A_\ell}_{\ell \in L}$, then $\unfold{B} =
    \ichoice{m : B_m}_{m \in M}$ where $L \subseteq M$ and $(A_\ell, B_\ell) \in \rel$ for
    all $\ell \in L$.

    \item If $\unfold{A} = \echoice{\ell : A_\ell}_{\ell \in L}$, then $\unfold{B} =
    \echoice{m : B_m}_{m \in M}$ where $L \supseteq M$ and $(A_m, B_m) \in \rel$ for
    all $m \in M$.

    \item If $\unfold{A} = A_1 \lolli A_2$, then $\unfold{B} =
    B_1 \lolli B_2$ and $(B_1, A_1) \in \rel$ and
    $(A_2, B_2) \in \rel$.

    \item If $\unfold{A} = A_1 \tensor A_2$, then $\unfold{B} =
    B_1 \tensor B_2$ and $(A_1, B_1) \in \rel$ and
    $(A_2, B_2) \in \rel$.

    \item If $\unfold{A} = \one$, then $\unfold{B} = \one$.

    \item If $\unfold{A} = \texists{x} A'$, then $\unfold{B} = \texists{y} B'$
    and for all $C \in \mi{Type}$, $(A'[C/x], B'[C/y]) \in \rel$.

    \item If $\unfold{A} = \tforall{x} A'$, then $\unfold{B} = \tforall{y} B'$
    and for all $C \in \mi{Type}$, $(A'[C/x], B'[C/y]) \in \rel$.
  \end{itemize}
\end{definition}

Relying on the idea that covariance is \emph{the variance}, we 
now define relations based on each kind of variance.

\begin{definition}
\label{def:variance_based_rel}
	Given a relation $\rel \subseteq \mi{Type} \times \mi{Type}$,
	we define variance-based relations as follows:
	\begin{itemize}
		\item The covariant-relation of $\rel$ is $\rel^\cov = \rel$.
		\item The contravariant-relation of $\rel$ is $\rel^\conv = \{(A,B) \mid (B,A)\in \rel\}$.
		\item The bivariant-relation of $\rel$ is $\rel^\biv = \{(A,B) \mid 
			  (A,B) \in \rel \text{ and } (B,A)\in \rel\}$.
		\item The nonvariant-relation of $\rel$ is $\rel^\nonv = \{(A,B) \mid 
			  A,B\in \mi{Type}\}$.
	\end{itemize}
\end{definition}

\paragraph{The subtyping relation.}
The definition of subtyping is based on the notion of
type simulation and takes advantage of variance-based 
relations to ensure the required sensitivity to variances.

\begin{definition}[Subtyping]\label{def:subtp_closed}
  Given closed types $A$ and $B$  s.t.
  $\cdot\semi\cdot \vdash A \ofv \xi $
  and
  $\cdot\semi\cdot \vdash B \ofv \xi $,
  we say that $A$ is a subtype of $B$ at variance $\xi$,
  written $A \leq B \ofv \xi$,
  if there exists a type simulation $\rel$ such that 
  $(A, B)\in \rel^\xi $.
\end{definition}

Since our algorithms have to deal with open types,
we can lift this definition by considering suitably
valid substitution instances as follows.

\begin{definition}[Subtyping of open types]\label{def:subtp}
  Given types $A$ and $B$  s.t.
  $\vars\semi\cdot \vdash A \ofv \xi $
  and
  $\vars\semi\cdot \vdash B \ofv \xi $,
  we say that $A$ is a subtype of $B$ at variance $\xi$,
  written $\forall \vars.\, A \leq B \ofv \xi$,
  if
  there exists a type simulation $\rel$ such that 
  $(A[\sigma], B[\sigma])\in \rel^\xi $
  for all closed substitutions $\sigma$ 
  over $\vars$.
\end{definition}

The definition of subtyping could be extended to open types
defined over a non-empty set $\Xi$, just by 
additionally closing the types with all substitutions
$\theta$ and $\theta'$ such that
$\vdash \theta \ofv \Xi$ and 
     $\vdash \theta' \ofv \Xi$, and
$(C,D)\in \rel^{\xi}$ for each $C/\alpha\in\theta$ and 
     $D/\alpha\in\theta'$ and $\alpha \ofv \xi\in \Xi$.
However, at present, we only need to consider an empty $\Xi$ because
(a) syntactic subtyping is only invoked from the type checker 
for the program, and the program only has
the quantified variables in $\vars$, (b) definitions are unfolded 
during the algorithm so we never need to consider a type
with free type parameters.


%
%

\subsection{Undecidability of subtyping}

In this section, we prove that the 
subtyping of nested polymorphic session types is \emph{undecidable}.
Subtyping is undecidable even for a small fragment of these types.
%

In previous work we observed that the type system for 
\polymorphicsessiontypes\ (without explicit quantifiers) has 
many similarities with deterministic pushdown 
automata~\cite{Das21esop}. Even though we have shown that the type equality problem
is decidable,
it is thus perhaps not surprising that the subtyping problem inherits the famous undecidability of the language
inclusion problem for simple languages~\cite{Friedman76tcs}. 
We now prove that the subtyping problem for nested polymorphic session types
 is undecidable by reducing the language inclusion problem
 for Basic Process Algebra (BPA) processes~\cite{GrooteH94}
 to our problem. 
 
\paragraph{Basic Process Algebras.}
\emph{BPA expressions} are defined by the grammar:
\[
  \begin{array}{lrcl}
    \mbox{BPA Expressions} & p,q & ::= &
    a
    \mid X
    \mid p+q
    \mid p \cdot q
  \end{array}
\]
where $a$ ranges over a set of \emph{atomic actions}, $X$ is a \emph{variable},
$+$ represents \emph{non-deterministic choice}, and $\cdot$
represents \emph{sequential composition}~\cite{ChristensenHS95}. 
Recursive \emph{BPA processes} are 
defined by means of \emph{process equations}
$\Delta = \{ X_i \triangleq p_i \}_{i\in I}$, where $X_i$ are distinct 
variables, one of which is identified as the \emph{root}.
A BPA expression is \emph{guarded} if any variable occurs in the 
scope of an atomic action. A system $\Delta = \{ X_i \triangleq p_i \}_{i\in I}$ 
of process equations is guarded if $p_i$ is a guarded expression, for all $i\in I$.
We consider that \emph{all} process equations are guarded.
The \emph{empty process} $\varepsilon$ is the 
neutral element of sequential composition, but does not occur
in a process definition---it is only considered for the 
operational semantics (for more details, see~\cite{ChristensenHS95}).

The operational semantics of a BPA process is a 
labelled transition relation, defined over a set $\Delta$ of guarded
equations as follows:
  \begin{mathpar}
  	\infer[]
  	{a \xrightarrow{a} \varepsilon}{}
  	\and
    \infer[]
    {p+q \xrightarrow{a} p'}
    {p \xrightarrow{a} p'}
     \and
    \infer
    {p+q \xrightarrow{a} q'}
    {q \xrightarrow{a} q'}
    \and
    \infer
    {p\cdot q \xrightarrow{a} p'\cdot q}
    {p \xrightarrow{a} p'}
    \and
    \infer
    {p\cdot q \xrightarrow{a} q}
    {p \xrightarrow{a} \varepsilon}
    \and
  	\infer
    {X \xrightarrow{a} p' }
    {p \xrightarrow{a} p'\enspace (X \triangleq p \in \Delta)}
  \end{mathpar}
The language accepted by a BPA process $p$ is defined as
$L(p) = \{ \overline{a} \mid p \xrightarrow{\overline{a}} \varepsilon\}$.
A process $p$ is \emph{deterministic} if whenever $p\xrightarrow{a} p'$ and
$p\xrightarrow{a} p''$, then $p' = p''$.
A set of process equations $\Delta$ is \emph{normed} if for all 
variables $X$ in $\Delta$, there is a \emph{trace} $\overline{a}$ s.t. 
$X\xrightarrow{\overline{a}} \varepsilon$. A BPA process $p$ is normed if it is 
defined through a normed set of equations. 


\paragraph{Translation of BPA to \PolymorphicSessionTypes}
In the remainder of this section, we focus on normed and deterministic
processes, defined over a set of guarded process equations. 
Let $\mathcal{P}$ denote the set of such BPA processes.
We present the translation of a BPA process $p_0\in \mathcal{P}$, 
defined over a set of BPA equations $\Delta_0$ with root $X_0$, in three steps:
(1) we propose a general translation of BPA guarded expressions to nested session types,
(2) we convert $\Delta_0$ into a type signature $\Sigma_0$,
and (3) we propose a translation for $p_0$.

A \emph{guarded} BPA expression is translated to a 
nested session type without explicit quantifiers and 
parametrized by a type variable $\alpha$. 
This translation is denoted by $\llparenthesis \cdot \rrparenthesis_\alpha$
and is defined as follows:
  \begin{mathpar}
  	\llparenthesis a \rrparenthesis_\alpha = \ichoice{a: \alpha}
  	\and 
  	\llparenthesis X \rrparenthesis_\alpha = X[\alpha/\alpha]
  	\and
  	\llparenthesis a\cdot p + b \cdot q \rrparenthesis_\alpha = 
  			\ichoice{a: \llparenthesis p \rrparenthesis_\alpha,
  					 b: \llparenthesis q \rrparenthesis_\alpha}
  	\and 
    \llparenthesis p \cdot q \rrparenthesis_\alpha = \llparenthesis p \rrparenthesis_\alpha [\llparenthesis q \rrparenthesis_\alpha/\alpha]
    \and 
    \llparenthesis \varepsilon \rrparenthesis_\alpha = \alpha
  \end{mathpar}
 
Through this translation, BPA expressions are converted 
into session types characterized by
providing internal choices and nesting. Atomic actions are translated
into an internal (single) choice with label $a$. Process variables 
lead to type names that are further defined through type definitions 
(detailed below). Non-deterministic choices occurring in \emph{guarded} 
process expressions are characterized by providing atomic actions 
in the scope of any type variable; for this reason, any non-deterministic 
choice can be written in the form $a\cdot p + b \cdot q$, where $a$ and $b$
are atomic actions and $p$ and $q$ are processes. These
choices are converted into internal choices where each branch is 
labelled by the corresponding atomic action. Sequential composition 
is translated into nested types.
Empty continuations are captured by the type parameter $\alpha$. 
Since we are only considering deterministic processes, the translated types
are well-defined.

The type signature $\Sigma_0$ is composed by the translation of all process 
equations in $\Delta_0$.
\begin{equation*}
  \Sigma_0 = \llparenthesis \Delta_0\rrparenthesis_\alpha = \{ X[\alpha\ofv\cov] = \llparenthesis p \rrparenthesis_\alpha
  \mid X\triangleq p \in \Delta \}
\end{equation*}
All type definitions are parametrized by $\alpha$. The parameter $\alpha$
is assigned variance $\cov$ because it only occurs in covariant positions
in the type constructors used in the translation.

Finally, the translation of process $p_0\in \mathcal{P}$,
defined over $\Delta_0$ with root $X_0$, is the nested session type
$X_p [\one / \alpha]$ defined over signature
$\Sg_0=\llparenthesis \Delta \rrparenthesis_\alpha$.
We denote the translation of $p_0$ by $\llparenthesis p_0 \rrparenthesis$.

\begin{example}
The BPA process $p_0$ with root $X_0$, defined over 
$\Delta_0 = \{ X_0 \triangleq a\ \cdot X_0 \cdot c + b \cdot X_1,\enspace X_1 \triangleq a\}$
is translated to type $\llparenthesis p_0 \rrparenthesis=X_0[\one/\alpha]$, defined over the signature composed
by the definitions:
\begin{equation*}
  \begin{array}{lllllll}
  	X_0[\alpha\ofv \cov] &=& \ichoice{a: \llparenthesis X_0 \cdot c \rrparenthesis_\alpha, 
  	                                b : \llparenthesis X_1 \rrparenthesis_\alpha}
  	                     &=& \ichoice{a: X_0 [\ichoice{c: \alpha}/\alpha], 
  	                                b : X_1[\alpha/\alpha]}\\
  	X_1[\alpha\ofv\cov] & =& \ichoice{a : \llparenthesis \varepsilon \rrparenthesis_\alpha}     
  					    & =& \ichoice{a : \alpha}.   
  \end{array}
\end{equation*}
The parallel between $p_0$'s labelled transitions -- $a$, $b$, $c$ -- and 
$\llparenthesis p_0 \rrparenthesis$'s simulation steps -- $\ichoiceop a, \ichoiceop b,
\ichoiceop c$ -- should now become clear.
\end{example}

%

\paragraph{Undecidability of Subtyping}
We now use the translation above to reduce the language 
inclusion problem for BPA processes to
the subtyping problem of nested polymorphic session types. Groote and Huttel
proved that the former is undecidable~\cite{GrooteH94}, thus our 
subtyping problem is also undecidable.
Indeed, this result tells us more:
the subtyping problem is undecidable for the 
smaller fragment of nested polymorphic session types
that allows mutual (parametrized) type definitions only 
involving internal choices and type nesting.

We start with two auxiliary lemmas.

\begin{lemma}
\label{lem:traces_after_trtansitions}
	Given two deterministic and normed processes $p,q\in \mathcal{P}$, 
	if $L(p)\subseteq L(q)$ and $p \xrightarrow{a} p'$
	and $q\xrightarrow{a}q'$, then $L(p')\subseteq L(q')$.
\end{lemma}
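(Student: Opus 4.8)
The plan is to argue directly from the definitions of language inclusion, the BPA transition relation, and determinism. First I would fix a trace $\overline{b} \in L(p')$, so that $p' \xrightarrow{\overline{b}} \varepsilon$. Prepending the transition $p \xrightarrow{a} p'$ gives $p \xrightarrow{a\overline{b}} \varepsilon$, hence $a\overline{b} \in L(p)$. By the assumption $L(p) \subseteq L(q)$ we get $a\overline{b} \in L(q)$, i.e.\ $q \xrightarrow{a\overline{b}} \varepsilon$. Now I would invert this run: there must be some $q''$ with $q \xrightarrow{a} q''$ and $q'' \xrightarrow{\overline{b}} \varepsilon$.

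The key step is then to invoke \emph{determinism} of $q$: since $q \xrightarrow{a} q'$ by hypothesis and $q \xrightarrow{a} q''$, determinism forces $q' = q''$. Therefore $q' \xrightarrow{\overline{b}} \varepsilon$, i.e.\ $\overline{b} \in L(q')$. Since $\overline{b} \in L(p')$ was arbitrary, this establishes $L(p') \subseteq L(q')$, as desired.

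One point that needs a word of care is the very first inversion step --- decomposing a run $q \xrightarrow{a\overline{b}} \varepsilon$ into a first $a$-step followed by the rest. This is not literally an axiom of the transition system; it follows by an easy induction on the structure of the BPA expression $q$ (equivalently, on the derivation of the labelled transition), using the rules for $+$, $\cdot$, and variable unfolding, together with the fact that $\Delta_0$ is guarded so that the first action of any nonempty trace is produced by some atomic-action rule. I would either cite this as a standard fact about guarded BPA processes (see~\cite{ChristensenHS95}) or dispatch it with a one-line structural induction.

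The main obstacle, such as it is, is purely notational: being careful about the convention for trace concatenation and about the fact that $\varepsilon$ (the empty process) only lives at the level of the operational semantics, not in process definitions, so the intermediate states $q''$, $p'$, $q'$ along a run are genuine BPA expressions to which the transition rules again apply. Normedness is not actually needed for this particular lemma --- it is carried along because it is needed elsewhere in the reduction --- so I would not dwell on it here. The whole argument is short; the content is entirely in the interplay of inclusion with the determinism of $q$.
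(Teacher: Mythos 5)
Your proposal is correct and follows essentially the same argument as the paper: prepend the $a$-transition to a trace of the continuation and use determinism of $q$ to identify its unique $a$-successor with $q'$. The paper merely phrases this contrapositively (assuming a witness $w \in L(p') \setminus L(q')$ and deriving $L(p) \not\subseteq L(q)$), while you argue directly; the content is the same.
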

\begin{proof}
	Assume that exists $w\in L(p')$ s.t. $w\not\in L(q')$.
	Obviously, $a\cdot w \in L(p)$. However, $a\cdot w\not \in L(q)$
	because $q$ is deterministic and there is no $q''\neq q'$
	s.t. $q\xrightarrow{a}q''$. Thus, we would have $L(p)\not\subseteq L(q)$.
\end{proof}

\begin{lemma}
\label{lem:generic_form_transl}
	Given a deterministic and normed process $p\in \mathcal{P}$,
	$\unfold{\llparenthesis p \rrparenthesis_\alpha}
	= \ichoice{a : \llparenthesis p_a \rrparenthesis_\alpha}_{a\in L}$
	where $L=\{a \mid p \xrightarrow{a} p_a\}$.
\end{lemma}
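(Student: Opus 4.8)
The plan is to prove the statement by structural induction on the guarded BPA expression $p$, where the key auxiliary fact — used in the variable and composition cases — is a \emph{head-normal-form property}: the translation of \emph{any} guarded expression $r$ is \emph{already} syntactically an internal choice, namely $\llparenthesis r\rrparenthesis_\alpha = \ichoice{a:\llparenthesis r_a\rrparenthesis_\alpha}_{a\in L_r}$ with $L_r = \{a\mid r\xrightarrow{a}r_a\}$, so no unfolding is needed at a guarded expression at all. Here we read an atomic action $a$ as $a\cdot\varepsilon$ and regard a guarded choice uniformly as a finite sum $\sum_i a_i\cdot p_i$ translating to $\ichoice{a_i:\llparenthesis p_i\rrparenthesis_\alpha}_i$; determinism of $p$ enters precisely by making the $a_i$ pairwise distinct, which is what makes this a well-formed type and makes the derivative $p_a$ named in the statement unique. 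The head-normal-form property is itself shown by a routine induction on the structure of the guarded expression (atomic action and guarded choice are immediate from the defining clauses of $\llparenthesis\cdot\rrparenthesis_\alpha$; in a composition $r'\cdot r''$ with $r'$ guarded, the induction hypothesis makes $\llparenthesis r'\rrparenthesis_\alpha$ an internal choice and substituting $\llparenthesis r''\rrparenthesis_\alpha$ for $\alpha$ preserves the outermost $\ichoiceop$), and it never recurses through a variable's definition, so it is well-founded.

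Given that, the cases of the main induction are short. For $p = a$: $\llparenthesis a\rrparenthesis_\alpha = \ichoice{a:\alpha}$ is structural, so $\unfold{\llparenthesis a\rrparenthesis_\alpha} = \ichoice{a:\alpha}$, and since $a\xrightarrow{a}\varepsilon$ is the only transition and $\llparenthesis\varepsilon\rrparenthesis_\alpha = \alpha$, this is $\ichoice{a:\llparenthesis\varepsilon\rrparenthesis_\alpha}$ with $L=\{a\}$. For a guarded choice $p = \sum_i a_i\cdot p_i$: the translation gives $\llparenthesis p\rrparenthesis_\alpha = \ichoice{a_i:\llparenthesis p_i\rrparenthesis_\alpha}_i$ directly, and the $+$-rules of the operational semantics give exactly the transitions $p\xrightarrow{a_i}p_i$. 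For the variable case $p = X$ with $X\triangleq r\in\Delta$: by the $\m{def}$ rule of unfolding together with the definition $X[\alpha\ofv\cov]=\llparenthesis r\rrparenthesis_\alpha$ in $\Sigma_0$, we have $\unfold{\llparenthesis X\rrparenthesis_\alpha} = \unfold{X[\alpha/\alpha]} = \llparenthesis r\rrparenthesis_\alpha[\alpha/\alpha] = \llparenthesis r\rrparenthesis_\alpha$; since $r$ is guarded — this is exactly the contractivity requirement on valid signatures from Section~\ref{sec:types} — the head-normal-form property turns $\llparenthesis r\rrparenthesis_\alpha$ into $\ichoice{a:\llparenthesis r_a\rrparenthesis_\alpha}_{a\in L_r}$, and because $X\xrightarrow{a}p'$ iff $r\xrightarrow{a}p'$ we get $L=L_r$ and $p_a=r_a$. (It is guardedness that guarantees one unfolding step already lands on an internal choice, so the single- versus iterated-step reading of $\unfold{\cdot}$ is immaterial here.)

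The substantive case is sequential composition $p = p'\cdot q$, where $\llparenthesis p'\cdot q\rrparenthesis_\alpha = \llparenthesis p'\rrparenthesis_\alpha[\llparenthesis q\rrparenthesis_\alpha/\alpha]$. First one needs a small substitution lemma: if $\unfold{A} = \ichoice{a:A_a}_{a\in L}$ then $\unfold{A[\theta]} = \ichoice{a:A_a[\theta]}_{a\in L}$. This holds because, whenever $\unfold{A}$ reaches an internal choice, the chain of head unfoldings never passes through a bare parameter $\alpha$ (a parameter is structural, so unfolding would stop there), hence each step only unfolds a defined name $V[\cdot]$ whose head is unaffected by $\theta$, substitution commutes with each such step, and the outermost $\ichoiceop$ of the final structural type is obviously preserved by $\theta$. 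Applying the induction hypothesis to $p'$ yields $\unfold{\llparenthesis p'\rrparenthesis_\alpha} = \ichoice{a:\llparenthesis p'_a\rrparenthesis_\alpha}_{a\in L'}$ with $L'=\{a\mid p'\xrightarrow{a}p'_a\}$, and the substitution lemma then gives $\unfold{\llparenthesis p'\cdot q\rrparenthesis_\alpha} = \ichoice{a:\llparenthesis p'_a\rrparenthesis_\alpha[\llparenthesis q\rrparenthesis_\alpha/\alpha]}_{a\in L'} = \ichoice{a:\llparenthesis p'_a\cdot q\rrparenthesis_\alpha}_{a\in L'}$, using the translation clause $\llparenthesis p'_a\cdot q\rrparenthesis_\alpha = \llparenthesis p'_a\rrparenthesis_\alpha[\llparenthesis q\rrparenthesis_\alpha/\alpha]$. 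It remains to match $L'$ and the branches with the operational behaviour of $p'\cdot q$: the two rules for $\cdot$ say $p'\cdot q\xrightarrow{a}p'_a\cdot q$ when $p'\xrightarrow{a}p'_a$ with $p'_a\neq\varepsilon$, and $p'\cdot q\xrightarrow{a}q$ when $p'\xrightarrow{a}\varepsilon$, and these are reconciled by $\llparenthesis\varepsilon\cdot q\rrparenthesis_\alpha = \alpha[\llparenthesis q\rrparenthesis_\alpha/\alpha] = \llparenthesis q\rrparenthesis_\alpha$, so writing the $a$-derivative of $p'\cdot q$ uniformly as $p'_a\cdot q$ (with $\varepsilon\cdot q = q$) is correct on both the process side and the type side. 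Hence $L = L'$ and $(p'\cdot q)_a = p'_a\cdot q$, completing the case.

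I expect this last case to be the main obstacle: it forces both the substitution lemma — i.e.\ showing that unfolding commutes with the substitution $[\llparenthesis q\rrparenthesis_\alpha/\alpha]$ — and the careful reconciliation of the two operational rules for $\cdot$ with that substitution, including the $\varepsilon$-bookkeeping. The variable case is delicate only through its reliance on the head-normal-form property (contractivity), the remaining cases are immediate, and determinism is used throughout to keep the internal-choice types well-formed and the derivatives well-defined; normedness is not needed for this lemma itself and only matters for keeping it applicable to derivatives in the surrounding argument.
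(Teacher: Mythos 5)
Your proof is correct and follows essentially the same route as the paper's: a structural induction on $p$ with the same case split, using guardedness to put the body of a defined variable into internal-choice form and pushing the substitution $[\llparenthesis q\rrparenthesis_\alpha/\alpha]$ through the unfolding in the sequential-composition case. The only difference is that you isolate as explicit auxiliary lemmas two facts the paper uses implicitly (the head-normal-form property of translated guarded expressions and the commutation of $\unfold{\cdot}$ with substitution), which makes your write-up somewhat more rigorous but not a different argument.
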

\begin{proof}
	Proof by induction on the structure of $p$. The cases for atomic action and
	non-deterministic choice are immediate from the definition of 
	$\llparenthesis \cdot \rrparenthesis_\alpha$.
	
	Case $p = X$ and $X \triangleq q \in \Delta$. 
	In this case, for each $a\in L$,
	by the last labelled transition rule, we have 
	$q\xrightarrow{a} p_a$. Since $q$ is guarded, $q$ is of the form
	$q = \sum_{a\in L} a \cdot p_a$. Thus, 
	$\llparenthesis q \rrparenthesis_\alpha
	= \ichoice{a : \llparenthesis p_a \rrparenthesis_\alpha}_{a\in L}$. 
	On the other hand, the type signature contains the definition 
	$X[\alpha\ofv \cov] = \llparenthesis q \rrparenthesis_\alpha$
	and $\llparenthesis p \rrparenthesis_\alpha = X[\alpha/\alpha]$.
	Hence, 
	$\unfold{\llparenthesis p \rrparenthesis_\alpha}
	= \ichoice{a : \llparenthesis p_a \rrparenthesis_\alpha}_{a\in L}$.
	
	Case $p = p_1\cdot p_2$. In this case, for each $a \in L$, either:
	$p_1\xrightarrow{a}p_a'$ and $p_a = p_a'\cdot p_2$;
	or, $p_1\xrightarrow{a}\varepsilon$ and $p_a = p_2$. 
	In any of this subcases,
		by induction hypothesis, 
		$\unfold{\llparenthesis p_1 \rrparenthesis_\alpha}
		= \ichoice{a : \llparenthesis p_a' \rrparenthesis_\alpha}_{a\in L}$,
		with $p_a' = \varepsilon$ in the second subcase.
		Hence,
		$\unfold{\llparenthesis p \rrparenthesis_\alpha}
		= \ichoice{a : \llparenthesis p_a' \rrparenthesis_\alpha[\llparenthesis p_2 \rrparenthesis_\alpha/\alpha]}_{a\in L}
		= \ichoice{a : \llparenthesis p_a'\cdot p_2 \rrparenthesis_\alpha}_{a\in L} $.
\end{proof}

\begin{theorem}
	\label{thm:language-inclusion-subtyping}
	Given two deterministic and normed BPA processes 
	$p,q \in \mathcal{P}$, $L(p)\subseteq L(q)$ if and only if 
	$\llparenthesis p \rrparenthesis \leq \llparenthesis q \rrparenthesis\ofv \cov$.
\end{theorem}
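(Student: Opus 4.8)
The plan is to prove both directions of the biconditional by exhibiting an explicit type simulation in one direction and by extracting a language-inclusion witness from an arbitrary type simulation in the other, using Lemma~\ref{lem:generic_form_transl} to control the shape of unfoldings throughout. Recall that $\llparenthesis p \rrparenthesis = X_p[\one/\alpha]$ and that, by a straightforward substitution lemma, $\unfold{\llparenthesis p \rrparenthesis} = \unfold{\llparenthesis p \rrparenthesis_\alpha}[\one/\alpha] = \ichoice{a : \llparenthesis p_a \rrparenthesis_\alpha[\one/\alpha]}_{a \in L}$ where $L = \{a \mid p \xrightarrow{a} p_a\}$; moreover $\llparenthesis p_a \rrparenthesis_\alpha[\one/\alpha] = \llparenthesis p_a \rrparenthesis$ when $p_a \neq \varepsilon$, and equals $\one$ when $p_a = \varepsilon$. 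So the only structural shapes that arise among the closed types $\llparenthesis p \rrparenthesis$ are internal choices (when $p$ can still transition) and $\one$ (reached exactly when $p \xrightarrow{a} \varepsilon$, since $p$ is normed and every expression is guarded). This reduces the entire problem to matching internal-choice branches against BPA transitions.

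For the forward direction ($L(p) \subseteq L(q) \Rightarrow \llparenthesis p \rrparenthesis \leq \llparenthesis q \rrparenthesis \ofv \cov$), I would define
\[
  \rel = \{(\llparenthesis p' \rrparenthesis, \llparenthesis q' \rrparenthesis) \mid p', q' \in \mathcal{P},\ L(p') \subseteq L(q')\} \cup \{(\one,\one)\}
\]
and verify it is a type simulation. Given $(\llparenthesis p' \rrparenthesis, \llparenthesis q' \rrparenthesis) \in \rel$, by Lemma~\ref{lem:generic_form_transl} (plus the $[\one/\alpha]$ substitution above) $\unfold{\llparenthesis p' \rrparenthesis} = \ichoice{a : \ldots}_{a \in L}$ with $L = \{a \mid p' \xrightarrow{a} p'_a\}$ and $\unfold{\llparenthesis q' \rrparenthesis} = \ichoice{b : \ldots}_{b \in M}$ with $M = \{b \mid q' \xrightarrow{b} q'_b\}$. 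Since $p'$ is normed, every $a \in L$ extends to a full trace of $p'$, hence lies in $L(p') \subseteq L(q')$, so $q'$ has an $a$-transition and $L \subseteq M$ as the simulation clause for $\ichoiceop$ requires. For each $a \in L$, Lemma~\ref{lem:traces_after_trtansitions} gives $L(p'_a) \subseteq L(q'_a)$; using determinism of $q'$ the $a$-successor is unique. If $p'_a = \varepsilon$ then $a \in L(p') \subseteq L(q')$ forces $q'_a = \varepsilon$ (determinism again), and the continuation pair is $(\one,\one) \in \rel$; otherwise $q'_a \neq \varepsilon$ (else $q'$ would have no $a$-transition from a normed process after — careful: $q'_a = \varepsilon$ is fine, but then $p'_a \neq \varepsilon$ means $p'$ has a proper successor with nonempty traces forcing $q'_a$ to as well — handle this by noting $L(p'_a) \subseteq L(q'_a)$ and $\varepsilon \notin$ issues), and the continuation pair $(\llparenthesis p'_a \rrparenthesis, \llparenthesis q'_a \rrparenthesis)$ is again in $\rel$. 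The $(\one,\one)$ pair trivially satisfies the $\one$ clause. Since $(\llparenthesis p \rrparenthesis, \llparenthesis q \rrparenthesis) \in \rel = \rel^\cov$, subtyping follows.

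For the converse, suppose $\rel$ is a type simulation with $(\llparenthesis p \rrparenthesis, \llparenthesis q \rrparenthesis) \in \rel$; I would show by induction on the length of $w$ that $w \in L(p) \Rightarrow w \in L(q)$, strengthening to: for all $p', q'$, if $(\llparenthesis p' \rrparenthesis, \llparenthesis q' \rrparenthesis) \in \rel$ then $L(p') \subseteq L(q')$. Take $w = a\,w' \in L(p')$, so $p' \xrightarrow{a} p'_a \xrightarrow{w'} \varepsilon$. Then $a \in L$ in the unfolding of $\llparenthesis p' \rrparenthesis$, so by the $\ichoiceop$ clause $\unfold{\llparenthesis q' \rrparenthesis}$ has an $a$-branch with $(\llparenthesis p'_a \rrparenthesis, \llparenthesis q'_a \rrparenthesis) \in \rel$ (reading $\llparenthesis \varepsilon \rrparenthesis$ as $\one$, and noting the $\one$ clause forces the other side to be $\one$ too, i.e.\ $q'_a = \varepsilon$ — this handles $w' = \epsilon$, the base case); by induction $w' \in L(q'_a)$, hence $a\,w' \in L(q')$. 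The empty-word base case is exactly the $\one$-clause argument. Applying the strengthened statement to $(\llparenthesis p \rrparenthesis, \llparenthesis q \rrparenthesis)$ gives $L(p) \subseteq L(q)$.

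I expect the main obstacle to be the careful bookkeeping around the empty process $\varepsilon$: it is not itself in $\mathcal{P}$ and $\llparenthesis \varepsilon \rrparenthesis$ is not literally defined, so one must consistently identify "$\llparenthesis \varepsilon \rrparenthesis$'' with $\one$ via the $[\one/\alpha]$ substitution and make sure the $\one$-clause of the simulation is invoked at exactly the points where a BPA trace terminates. Tightly coupled to this is the use of determinism and normedness at each step — determinism to pin down the unique $a$-successor so that Lemma~\ref{lem:traces_after_trtansitions} applies cleanly, and normedness to ensure that every enabled atomic action in fact participates in some terminating trace (so that $L \subseteq M$ in the $\ichoiceop$ clause really does follow from $L(p') \subseteq L(q')$ rather than needing a separate argument). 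Beyond these subtleties the argument is a routine coinduction/induction pairing driven entirely by Lemma~\ref{lem:generic_form_transl}.
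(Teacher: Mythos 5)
Your proposal is correct and follows essentially the same route as the paper: the forward direction uses the identical relation $\{(\llparenthesis p' \rrparenthesis, \llparenthesis q' \rrparenthesis) \mid L(p') \subseteq L(q')\} \cup \{(\one,\one)\}$ verified via Lemmas~\ref{lem:generic_form_transl} and~\ref{lem:traces_after_trtansitions}, and your converse is just the direct (non-contrapositive) phrasing of the paper's induction on the length of the traced prefix. Your explicit observation that normedness is what guarantees $L \subseteq M$ in the $\ichoiceop$ clause, and your care around identifying $\llparenthesis \varepsilon \rrparenthesis$ with $\one$, are both sound and, if anything, slightly more explicit than the paper's case analysis.
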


\begin{proof}
	For the direct implication, assume that $L(p)\subseteq L(q)$ and consider a 
	(covariant-)relation over nested polymorphic session types defined by:
	\begin{equation*}
		\mathcal{R} = \{ (\llparenthesis p_0 \rrparenthesis_\alpha[\one/\alpha],\llparenthesis q_0 \rrparenthesis_\alpha[\one/\alpha])
	 \mid  L(p_0) \subseteq L(q_0) \} \cup \{(\one,\one)\}.
	\end{equation*}
	Note that, by definition of process translation, 
	we have $(\llparenthesis p \rrparenthesis, \llparenthesis q \rrparenthesis)\in \mathcal{R}$.
	To check that $\mathcal{R}$ is a simulation, we just need to 
	verify that the conditions of Definition~\ref{def:subtp_rel} are met.
	For that, let
	$(\llparenthesis p_0 \rrparenthesis_\alpha[\one/\alpha],\llparenthesis q_0 \rrparenthesis_\alpha[\one/\alpha]) \in \mathcal{R}$.
	Since $\mathcal{R}$ is only composed by images of the translation, the only conditions
	of Definition~\ref{def:subtp_rel} that
	we end up verifying are the closure conditions for $\ichoiceop$ and $\one$.
	The closure condition for $\one$ is handled by $(\one,\one)\in \mathcal{R}$,
	provided that $p_0$ and $q_0$ do not represent the empty process.
	For $\ichoiceop$, we proceed by case analysis on $p_0$. Throughout the proof we 
	use $\unfold{\cdot}$ to handle type definitions that arise from the process
	equations and from the application of the last rule for the labelled transitions.
	
	Case $p_0=a$. In this case, $p_0\xrightarrow{a} \varepsilon$. Since $L(p_0)\subseteq L(q_0)$,
	then $q_0\xrightarrow{a} \varepsilon$. By definition,
	$\llparenthesis p_0 \rrparenthesis_\alpha = \ichoice{a: \alpha}$.
	Using Lemma~\ref{lem:generic_form_transl}, we have  
	$\unfold{\llparenthesis q_0 \rrparenthesis_\alpha} = \ichoice{a: \alpha }$.
	To prove that $\mathcal{R}$ is a simulation, we need to conclude that 
	$(\alpha[\one/\alpha], \alpha[\one/\alpha])\in \mathcal{R}$, which is immediate from $\mathcal{R}$'s 
	definition, noting that
	$\alpha[\one/\alpha] = \one$.
	
	Case $p_0 = X$ and $X\triangleq p_1\in \Delta$. Since $L(p_0) \subseteq L(q_0)$,
	for every $a$ s.t. $p_1\xrightarrow{a} p_a$ (i.e., $p_0\xrightarrow{a} p_a$), we
	know that $q_0\xrightarrow{a}q_a$ and, using Lemma~\ref{lem:traces_after_trtansitions}, $L(p_a)\subseteq L(q_a)$.
	By Lemma~\ref{lem:generic_form_transl}, we know that 
	$\unfold{\llparenthesis p_0 \rrparenthesis_\alpha}
	= \ichoice{a : \llparenthesis p_a \rrparenthesis_\alpha}_{a\in L}$
	where $L=\{a \mid p_1 \xrightarrow{a} p_a\}$ and
	$\unfold{\llparenthesis q_0 \rrparenthesis_\alpha}
	= \ichoice{b : \llparenthesis q_b \rrparenthesis_\alpha}_{b\in M}$,
	with $L \subseteq M$.
	Since $L(p_a)\subseteq L(q_a)$, we also have 
	$(\llparenthesis p_a \rrparenthesis_\alpha[\one/\alpha], 
	\llparenthesis q_a \rrparenthesis_\alpha[\one/\alpha])\in \mathcal{R}$,
	 for all $a\in L$.
	
	Case $p_0 = a\cdot p_1 + b\cdot p_2$. Recall that this case is representative for non-deterministic
	choice because all process equations are guarded.
	In this case, $p_0\xrightarrow{a} p_1$ and $p_0\xrightarrow{b} p_2$.
	Hence, $q_0\xrightarrow{a} q_1$ and $q_0\xrightarrow{b} q_2$
	and $L(p_1)\subseteq L(q_1)$ and $L(p_2)\subseteq L(q_2)$.
	Thus, 
	$\llparenthesis p_0 \rrparenthesis_\alpha
	= \ichoice{a : \llparenthesis p_1 \rrparenthesis_\alpha, b : \llparenthesis p_2 \rrparenthesis_\alpha}_{a\in L}$
	and, by Lemma~\ref{lem:generic_form_transl},
	$\unfold{\llparenthesis q_0 \rrparenthesis_\alpha}
	= \ichoice{\ell : \llparenthesis q_\ell \rrparenthesis_\alpha}_{\ell\in L}$,
	where $a,b \in L$ and $q_a = q_1$ and $q_b = q_2$.
	Since $L(p_1)\subseteq L(q_1)$ and $L(p_2)\subseteq L(q_2)$,
	we know that $(\llparenthesis p_1 \rrparenthesis_\alpha[\one/\alpha],
	\llparenthesis q_1 \rrparenthesis_\alpha[\one/\alpha])$,
	$(\llparenthesis p_2 \rrparenthesis_\alpha[\one/\alpha],
	\llparenthesis q_2 \rrparenthesis_\alpha[\one/\alpha])\in \mathcal{R}$.
	
	Case $p_0 = p_1\cdot p_2$. 
	In this case, by Lemma~\ref{lem:generic_form_transl},
	we have 
	$\unfold{\llparenthesis p_0 \rrparenthesis_\alpha}
	= \ichoice{a : \llparenthesis p_a \rrparenthesis_\alpha}_{a\in L}$
	with $L=\{a \mid p \xrightarrow{a} p_a\}$.
	Since $L(p_0) \subseteq L(q_0)$, by Lemma~\ref{lem:generic_form_transl}, we know that 
	$\unfold{\llparenthesis q_0 \rrparenthesis_\alpha}
	= \ichoice{m : \llparenthesis q_m \rrparenthesis_\alpha}_{m\in M}$
	with $L \subseteq M$. Furthermore, using Lemma~\ref{lem:traces_after_trtansitions},
	for each $a\in L$,
	$L(p_a) \subseteq L(q_a)$. Thus,
	$ (\llparenthesis p_a \rrparenthesis_\alpha[\one / \alpha],
	 \llparenthesis q_a \rrparenthesis_\alpha[\one / \alpha])\in \mathcal{R} $.
	
	Reciprocally, assume that $L(p)\not\subseteq L(q)$ and let
	$w\in L(p) $ be such that $w\not\in L(q) $.
	Let $w_0$
	be its greatest prefix that occurs in $\m{L}(q)$. 
	We have $ p \ltrans{w_0} p'$ and $q \ltrans{w_0} q'$.
	We can prove, by induction on the length of $w_0$ 
	and using Lemma~\ref{lem:generic_form_transl},
	that any simulation $\mathcal{R}$ for 
	$(\llparenthesis p \rrparenthesis_\alpha[\one/\alpha],
	\llparenthesis q \rrparenthesis_\alpha[\one/\alpha])$
	is such that $(\llparenthesis p' \rrparenthesis_\alpha[\one/\alpha],
	\llparenthesis q' \rrparenthesis_\alpha[\one/\alpha])\in \mathcal{R}$.
	However, since $w_0$ is the greatest (proper) prefix of $w$ for which 
	$ p \ltrans{w_0} p'$ and $q \ltrans{w_0} q'$, we know that there is a labelled
	transition $a$ for $p'$, $ p' \ltrans{a} p''$, that is not
	applicable to $q'$. Hence, 	
	we would have
	a choice label for $\llparenthesis p' \rrparenthesis_\alpha[\one/\alpha]$
	distinct from those in $\llparenthesis q' \rrparenthesis_\alpha[\one/\alpha]$.
	For that, note that all internal labels of $\llparenthesis q' \rrparenthesis_\alpha[\one/\alpha]$
	are derived from labelled transitions in $q'$.
	Thus, we have $(\llparenthesis p' \rrparenthesis_\alpha[\one/\alpha],
	\llparenthesis q' \rrparenthesis_\alpha[\one/\alpha])\not\in \mathcal{R}$. 
	We conclude that there would be no simulation for 
	$(\llparenthesis p \rrparenthesis_\alpha[\one/\alpha],
	\llparenthesis q \rrparenthesis_\alpha[\one/\alpha])$.
\end{proof}

\begin{theorem}
	\label{thm:subtypig-undecidable}
	Checking $A \leq B \ofv \cov $ is undecidable.
\end{theorem}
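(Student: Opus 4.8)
The plan is to derive undecidability of $A \leq B \ofv \cov$ directly from Theorem~\ref{thm:language-inclusion-subtyping} together with the known undecidability of the language inclusion problem for deterministic normed BPA processes. Groote and Huttel~\cite{GrooteH94} established that, given two deterministic normed BPA processes $p, q \in \mathcal{P}$, it is undecidable whether $L(p) \subseteq L(q)$; this is the classical undecidability phenomenon for simple languages in the sense of Friedman~\cite{Friedman76tcs}. Since Theorem~\ref{thm:language-inclusion-subtyping} reduces this problem to our subtyping question, it remains only to check that the reduction is effective and total.

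First I would observe that the translation $p \mapsto \llparenthesis p \rrparenthesis$ defined above is a computable map: given $p_0$ over a finite guarded system $\Delta_0$ with root $X_0$, one mechanically builds the signature $\Sigma_0 = \llparenthesis \Delta_0 \rrparenthesis_\alpha$ by translating each guarded equation, and then outputs the closed type $X_0[\one/\alpha]$ over $\Sigma_0$. Because every equation is guarded and every process considered is deterministic, each $\llparenthesis p_i \rrparenthesis_\alpha$ is a well-defined nested session type (after unfolding, a single internal choice with pairwise-distinct labels at its head, by Lemma~\ref{lem:generic_form_transl}); all parameters occur only covariantly, so $\alpha \ofv \cov$ yields a valid signature; and the resulting closed types satisfy $\cdot\semi\cdot \vdash \llparenthesis p \rrparenthesis \ofv \cov$ and $\cdot\semi\cdot \vdash \llparenthesis q \rrparenthesis \ofv \cov$, so the query $\llparenthesis p \rrparenthesis \leq \llparenthesis q \rrparenthesis \ofv \cov$ is well-posed.

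Then, suppose toward a contradiction that there were a decision procedure for $A \leq B \ofv \cov$. Composing it with the translation would give a decision procedure for language inclusion of deterministic normed BPA processes: on input $(p, q)$, compute $\llparenthesis p \rrparenthesis$ and $\llparenthesis q \rrparenthesis$, run the subtyping decision procedure, and, by Theorem~\ref{thm:language-inclusion-subtyping}, answer ``yes'' exactly when $L(p) \subseteq L(q)$. This contradicts~\cite{GrooteH94}, so subtyping is undecidable. Moreover, since the image of the translation uses only mutually recursive unary type definitions built from internal choices and type nesting, with the single parameter assigned variance $\cov$, the result already holds for this small fragment -- with no function or product types, type quantification, external choice, or multi-parameter constructors.

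As for obstacles, there is essentially no hard step remaining: the substance lies in Theorem~\ref{thm:language-inclusion-subtyping} and the cited undecidability of BPA inclusion. The only point requiring care is confirming that the reduction is total and stays inside the intended fragment, i.e., that the translation always produces a valid signature and well-formed closed types; this is exactly what the determinism and guardedness hypotheses and Lemma~\ref{lem:generic_form_transl} provide, as already noted when the translation was introduced.
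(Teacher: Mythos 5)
Your proposal is correct and follows essentially the same route as the paper: invoke Theorem~\ref{thm:language-inclusion-subtyping} as the reduction and conclude from Groote and Huttel's undecidability of language inclusion for deterministic normed BPA processes. The additional remarks on computability and totality of the translation are sensible diligence that the paper leaves implicit, but they do not change the argument.
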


\begin{proof}
	\autoref{thm:language-inclusion-subtyping} reduces the language inclusion
	problem for deterministic and normed BPA processes to the 
	subtyping problem of 
	(closed) nested polymorphic session types. Groote and Huttel proved the former is 
	undecidable~\cite{GrooteH94}, thus
	the subtyping problem for nested polymorphic session types is also 
	undecidable.
\end{proof}

These results show that subtyping is already undecidable even for a \emph{small fragment}
of nested polymorphic session types: the fragment composed of 
\emph{nested} session types (mutually) defined through \emph{unary} type definitions whose
type parameter is always assigned positive variance; these definitions only
provide \emph{internal choices}. 
\section{Practical Algorithm for Subtyping}
\label{sec:subtp-alg}

Although the subtyping problem is undecidable, we have designed a coinductive algorithm
for approximating this problem. The algorithm is sound but incomplete.
The undecidability of subtyping precludes us from achieving a complete algorithm. However, we
propose a recourse that enables the programmer to provide a \emph{seed}
and help the algorithm \emph{generalize the coinductive hypothesis}.

Taking inspiration from Gay and Hole~\cite{Gay05acta}, we attempt to construct a type simulation.
Our algorithm can terminate in three states: \emph{(i)} types are proved to
have a subtyping relation by constructing a simulation, \emph{(ii)} a counterexample
is detected by identifying a position where the subtype and the supertype differ,
or \emph{(iii)} no conclusive answer is obtained due to algorithm's incompleteness.
We interpret both \emph{(ii)} and \emph{(iii)} as a failure of subtyping
verification, but the extension presented in~\autoref{subsec:eqtypes}
comes to the rescue in the case of \emph{(iii)}.

Our subtyping algorithm is deterministic (with no backtracking) and is presented 
in~\autoref{subsec:subtp_algorithm}. This algorithm is proved
to be sound (\autoref{subsec:soundness_subtp}).
Our algorithm assumes a preliminary pass over the given types to 
introduce \emph{fresh internal names}.

\subsection{Internal renaming}
\label{subsec:internal_renaming}

The fundamental operation in the subtyping algorithm
of our recursive structural types is \emph{loop detection}, 
where we determine if we have already added 
a subtyping relation $A \leq B$ to the type simulation.
However, our simulation also contains open types with free 
variables and, therefore, determining if we have already considered
two types in the subtyping relation becomes a difficult operation.
Following our previous approach~\cite{Das21esop}, we have reduced
this problem to the verification of loop detection on defined 
type names. For this purpose, we perform a renaming of the
given types by introducing fresh internal type names and definitions.

\paragraph{Renaming}
A preliminary type transformation assigns a \emph{fresh} name to
each intermediate (structural) type expression in the given types. 
The new internal names are parametrized over their type variables,
and their definitions are added to the signature.
After the \emph{internal renaming}, 
the type grammar becomes:
\[
\begin{array}{rcl}
  A & ::= & \ichoice{\ell : T}_{\ell \in L}
  \mid \echoice{\ell : T}_{\ell \in I}
  \mid T \lolli T \mid T \tensor T \mid \one \mid \texists{x} T \mid \tforall{x} T \mid x \mid \alpha \\
  T & ::= & V[\theta]
\end{array}
\]
Note that the substitutions $\theta$ are also \emph{internally renamed}, 
implying that the continuation types are a nesting of type names.
In the resulting signature, type names and structural types alternate,
thus allowing loop detection to be entirely performed on defined type names.

\begin{example}
After creating internal names for a list of natural numbers, $\mathsf{List}[\mathsf{nat}]$
where $\mathsf{List}[\alpha] = \ichoice{\mathbf{nil} : \one, \mathbf{cons} : \alpha \tensor \mathsf{List}[\alpha]}$
and $\mathsf{nat} = \ichoice{\mathbf{z} : \one, \mathbf{s} : \mathsf{nat}}$, we
obtain the following declarations:
\begin{mathpar}
	\mathsf{nat} = \ichoice{\mathbf{z} : X_1, \mathbf{s} : \mathsf{nat}}
	\and \mathsf{List}[\alpha] = \ichoice{\mathbf{nil} : X_2, \mathbf{cons} : X_3[\alpha]}\\
	X_1 = \one 
	\and X_2 = \one
	\and X_3[\alpha] = X_4[\alpha] \tensor \mathsf{List}[\alpha]
	\and X_4[\alpha] = \alpha
\end{mathpar}
(To ease the notation, when the type constructors are \emph{unary}, 
we often omit the explicit substitution for the parameter. In $\mathsf{List}$,
for instance, a substitution by $\mathsf{nat}$ should be considered as 
$\theta_{\mathsf{nat}} = (\mathsf{nat}/\alpha)$.)

To illustrate the invariant that continuation types are a nesting
of type names, notice that: a list 
$\mathsf{List}[\ichoice{\mathbf{s} : \ichoice{\mathbf{z} : \one}}]$
is renamed to $\mathsf{List}[X_5]$ under a signature
extended with definitions
\begin{mathpar}
	X_5 = \ichoice{\mathbf{s} : X_6}
	\and X_6 = \ichoice{\mathbf{z} : X_7}
	\and X_7 = \one,
\end{mathpar}
whereas a list of lists of natural numbers, 
$\mathsf{List}[\mathsf{List}[\mathsf{nat}]]$, is already 
renamed, assuming the previous renaming for lists of natural numbers. 
For the latter, the continuation types are now $X_2$ (for branch $\mathbf{nil}$)
and $X_3[\mathsf{List}[\mathsf{nat}]]$ (for branch $\mathbf{cons}$) 
-- both type names.
By unfolding them, we get the structural types
$\one$ and $X_4 \tensor \mathsf{List}[\mathsf{List}[\mathsf{nat}]]$.
The former type does not have any continuation, but the latter
has two continuations: $X_4$ and 
$\mathsf{List}[\mathsf{List}[\mathsf{nat}]]$ -- both type names.
By unfolding, we get structural types again, and so on. 
Our subtyping algorithm takes advantage of this alternation.

%
%
\end{example}

\paragraph{Variance assignment}
Variance assignment is done by inspecting the signature.
We start with declarations $V[\overline{\alpha}] = A$ to elaborate
into declarations $V[\Xi] = A$, assigning a variance $\xi$ to each
type parameter $\alpha$ in each declaration. 

Variances are computed using the variants construction method of Altidor et al.~\cite{Altidor11pldi}: 
we start with the
approximation $\nonv$ for each variance and calculate the least fixed
point to get the most informative variances. Since we have a finite lattice
of variances, this procedure terminates. If a type parameter occurs in 
both covariant and contravariant positions, it is assigned variance $\biv$.
In the degenerate case where $\alpha$ does not occur 
in $A$, we end up with $\alpha\ofv\nonv\in \Xi$.

\begin{example}
\label{ex:signature_list_nat}
After variance assignment, the signature for our list of natural numbers becomes:
\begin{mathpar}
	\mathsf{nat} = \ichoice{\mathbf{z} : X_1, \mathbf{s} : \mathsf{nat}}
	\and \mathsf{List}[\alpha\ofv \cov] = \ichoice{\mathbf{nil} : X_2, \mathbf{cons} : X_3[\alpha]}\\
	X_1 = \one 
	\and X_2 = \one
	\and X_3[\alpha\ofv \cov] = X_4[\alpha] \tensor \mathsf{List}[\alpha]
	\and X_4[\alpha \ofv \cov] = \alpha
\end{mathpar}
\end{example}

\paragraph{Subtyping preservation}
Our internal renaming preserves subtyping. 
To prove this, recall that our subtyping relation relies on the notion of simulation.
Since type simulations are defined over \emph{unfolded} types,  
the notion of simulation is naturally preserved by the internal renaming.

\begin{lemma}
\label{lem:type_sim}
Given a type simulation $\rel \subseteq \mi{Type} \times \mi{Type}$,
if $(A, B) \in \rel$ and $A$ and $B$ are \emph{internally renamed}, then the
following conditions hold:
  \begin{itemize}
    \item If $\unfold{A} = \ichoice{\ell : V_\ell[\theta_{A,\ell}]}_{\ell \in L}$, then $\unfold{B} =
    \ichoice{m : U_m[\theta_{B,m}]}_{m \in M}$ for $L \subseteq M$ and 
    $(V_\ell[\theta_{A,\ell}], U_\ell[\theta_{B,\ell}]) \in \rel$ for all $\ell \in L$.

    \item If $\unfold{A} = \echoice{\ell : V_\ell[\theta_{A,\ell}]}_{\ell \in L}$, then $\unfold{B} =
    \echoice{m : U_m[\theta_{B,m}]}_{m \in M}$ for $L \supseteq M$ and 
    $(V_m[\theta_{A,m}], U_m[\theta_{B,m}]) \in \rel$ for
    all $m \in M$.

    \item If $\unfold{A} = V_1[\theta_{A,1}] \lolli V_2[\theta_{A,2}]$, then $\unfold{B} =
     U_1[\theta_{B,1}] \lolli U_2[\theta_{B,2}]$ and we have  $(U_1[\theta_{B,1}], V_1[\theta_{A,1}]) \in \rel$ and
    $(V_2[\theta_{A,2}], U_2[\theta_{B,2}]) \in \rel$.

    \item If $\unfold{A} = V_1[\theta_{A,1}] \tensor V_2[\theta_{A,2}]$, then $\unfold{B} =
    U_1[\theta_{B,1}] \tensor U_2[\theta_{B,2}]$ and we have  $(V_1[\theta_{A,1}], U_1[\theta_{B,1}]) \in \rel$ and
    $(V_2[\theta_{A,2}], U_2[\theta_{B,2}]) \in \rel$.

    \item If $\unfold{A} = \one$, then $\unfold{B} = \one$.

    \item If $\unfold{A} = \texists{x} V[\theta_{A}]$, then $\unfold{B} = \texists{y} U[\theta_{B}]$
    and for all  $C \in \mi{Type}$, we have $(V[\theta_{A}][C/x], U[\theta_{B}][C/y]) \in \rel$.

    \item If $\unfold{A} = \tforall{x} V[\theta_{A}]$, then $\unfold{B} = \tforall{y} U[\theta_{B}]$
    and for all $C \in \mi{Type}$, we have $(V[\theta_{A}][C/x], U[\theta_{B}][C/y]) \in \rel$.
  \end{itemize}
\end{lemma}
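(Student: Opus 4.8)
The plan is to deduce each bullet of Lemma~\ref{lem:type_sim} from the corresponding bullet of Definition~\ref{def:subtp_rel} applied to the pair $(A,B)\in\rel$, using only the structure imposed by internal renaming. First I would observe the key structural fact established in Section~\ref{subsec:internal_renaming}: after internal renaming, every \emph{structural} type has all of its immediate continuation slots occupied by type names $T = V[\theta]$, and moreover all substitutions $\theta$ are themselves internally renamed (nestings of type names). Since $\unfold{\cdot}$ by definition either unfolds a top-level type name or leaves a structural type unchanged, $\unfold{A}$ and $\unfold{B}$ are structural; hence for a renamed $A$ the shapes enumerated in Definition~\ref{def:subtp_rel} specialize exactly to the shapes in the statement of Lemma~\ref{lem:type_sim}, e.g.\ $\unfold{A} = \ichoice{\ell : A_\ell}_{\ell\in L}$ with each $A_\ell$ necessarily of the form $V_\ell[\theta_{A,\ell}]$.

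Given that, each bullet is essentially immediate. For the $\ichoiceop$ case: Definition~\ref{def:subtp_rel} gives $\unfold{B} = \ichoice{m:B_m}_{m\in M}$ with $L\subseteq M$ and $(A_\ell,B_\ell)\in\rel$ for all $\ell\in L$; since $B$ is renamed, each $B_m$ is some $U_m[\theta_{B,m}]$, and substituting the names back in yields exactly $(V_\ell[\theta_{A,\ell}],U_\ell[\theta_{B,\ell}])\in\rel$. The $\echoiceop$, $\lolli$, $\tensor$, and $\one$ cases are handled the same way, simply reading off the renamed continuation names; the contravariant swap in the $\lolli$ case is inherited verbatim from Definition~\ref{def:subtp_rel}. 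For the quantifier cases $\texists{x}$ and $\tforall{x}$, the only subtlety is that the continuation under the quantifier is now a single type name $V[\theta_A]$ (resp.\ $U[\theta_B]$), and Definition~\ref{def:subtp_rel} already asserts $(A'[C/x],B'[C/y])\in\rel$ for all $C\in\mi{Type}$; substituting $A' = V[\theta_A]$ and $B' = U[\theta_B]$ gives the stated conclusion. (One should note in passing that $V[\theta_A][C/x]$ remains a valid renamed type, since $C$ is substituted for a quantified variable appearing inside the nested names, which is exactly the situation anticipated in the renaming example.)

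I expect the only mild obstacle to be making the structural observation precise: one must verify that the internal-renaming grammar for $A$ and $T$ together with the unfolding rules genuinely forces $\unfold{A}$, once structural, to have every continuation slot (including those inside substitutions $\theta$) of the form $V[\theta']$. This is really a syntactic invariant of the renaming pass rather than a property of simulations, so I would state it as a short preliminary remark — "renaming is preserved by $\unfold{\cdot}$ and by substitution of closed renamed types for quantified variables" — and then the lemma follows by a routine case split on the shape of $\unfold{A}$, invoking Definition~\ref{def:subtp_rel} in each case.
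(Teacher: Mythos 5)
Your proposal is correct and follows essentially the same route as the paper, whose proof is exactly a case analysis on the structure of internally renamed types combined with an application of Definition~\ref{def:subtp_rel}. Your additional remark that the renaming invariant (continuation slots are type names, preserved under unfolding and substitution) should be isolated as a preliminary observation is a reasonable elaboration of what the paper leaves implicit.
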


\begin{proof}
	Proof by case analysis on the structure of types after internal renaming
	and applying Definition~\ref{def:subtp_rel}.
\end{proof}

\subsection{Subtyping algorithm}
\label{subsec:subtp_algorithm}

The subtyping algorithm capitalizes on the invariants established
by the internal renaming and, thus, only needs to compare 
two structural types or two type names.
The judgment 
is written as $\vars \semi \G \vdash_{\Sg_0} A \leq B \ofv \delta$
and is defined over a valid signature $\Sg_0$.
$\vars$ stores the free variables in $A$ and $B$
and $\G$ stores the subtyping constraints that are
collected while following the algorithm recursively. 

Under constraints $\Gamma$, we 
attempt to construct a simulation. The subtyping
constraints are of the form
$\clo{\vars}{V_1[\theta_1] \leq V_2[\theta_2] \ofv \delta}$
and are called \emph{closures}.
If a derivation can be constructed, all \emph{closed instances}
of all closures are included in the resulting simulation (see the
proof of~\autoref{thm:soundness}). A closed instance of closure 
$\clo{\vars}{V_1[\theta_1] \leq V_2[\theta_2] \ofv \delta}$ is obtained
by applying a closed substitution $\sigma$ over variables in $\vars$ in a way that 
$V_1[\theta_1][\sigma]$ and $V_2[\theta_2][\sigma]$ have no free type
variables. Note that, since type names are not defined over free quantified 
variables (recall the rules for valid 
signatures,~\autoref{subsec:signatures_types_subst}), 
we can write $V_1[\theta_1][\sigma]$ and $V_2[\theta_2][\sigma]$
as $V_1[\theta_1[\sigma]]$ and $V_2[\theta_2[\sigma]]$.

We present the rules in the remainder of this subsection. 
Since the signature is fixed, we elide it from the rules.
The algorithm is initiated with an empty $\G$ and 
follows the rules of the judgment. 
In the rules below, $T$ and $U$ denote arbitrary 
(internally renamed) types, while $V$ denotes a type name.

Rules for $\oplus$ and $\with$ are specialized to variance. Although 
the continuation types preserve the variance of the context, not surprisingly,
the inclusion of the choice labels is sensitive to the variance. Covariance 
preserves the usual inclusion for subtyping, contravariance reverts the inclusion, 
bivariance requires both inclusions and, thus, the sets of 
labels are equal.
\begin{mathpar}
\infer[\hspace*{-1mm}\oplus_{\cov}]
  {\vars \semi \G \vdash
  \ichoice{l : T_\ell}_{\ell \in L} \leq \ichoice{m : U_m}_{m \in M} \ofv \cov}
  {L \subseteq M & \vars \semi \G \vdash
  T_\ell \leq U_\ell \ofv \cov \quad (\forall \ell \in L)}
  \\
  \infer[\hspace*{-1mm}\with_{\cov}]
  {\vars \semi \G \vdash
  \echoice{\ell : T_\ell}_{\ell \in L} \leq \echoice{m : U_m}_{m \in M} \ofv \cov}
  {L \supseteq M & \vars \semi \G \vdash
  T_m \leq U_m \ofv \cov \quad (\forall m \in M)}
  \\
      \infer[\hspace*{-1mm}\oplus_{\conv}]
  {\vars \semi \G \vdash
  \ichoice{\ell : T_\ell}_{\ell \in L} \leq \ichoice{m : U_m}_{m \in M} \ofv \conv}
  {L \supseteq M & \vars \semi \G \vdash
  T_m \leq U_m \ofv \conv \quad (\forall m \in M)}
  \\
   \infer[\hspace*{-1mm}\with_\conv]
  {\vars \semi \G \vdash
  \echoice{\ell : T_\ell}_{\ell \in L} \leq \echoice{m : U_m}_{m \in M} \ofv \conv}
  {L \subseteq M & \vars \semi \G \vdash
  T_\ell \leq U_\ell \ofv \conv \quad (\forall \ell \in L)}
  \\  
  \infer[\hspace*{-1mm}\oplus_{\biv}]
  {\vars \semi \G \vdash
  \ichoice{\ell : T_\ell}_{\ell \in L} \leq \ichoice{m : U_m}_{m \in M} \ofv \biv}
  {L = M & \vars \semi \G \vdash
  T_\ell \leq U_\ell \ofv \biv \quad (\forall \ell \in L)}	
  \\
  \infer[\hspace*{-1mm}\with_{\biv}]
  {\vars \semi \G \vdash
  \echoice{\ell : T_\ell}_{\ell \in L} \leq \echoice{m : U_m}_{m \in M} \ofv \biv}
  {L = M & \vars \semi \G \vdash
  T_\ell \leq U_\ell \ofv \biv \quad (\forall \ell \in L)}
\end{mathpar}

Tensor preserves the context variances for both components,
whereas \emph{lolli} considers the contravariant position of the
first component and \emph{negates} the variance of its context,
while maintaining the variance for the second component.
\begin{mathpar}
  \infer[\hspace*{-1mm}\tensor]
  {\vars \semi \G \vdash
  T_1 \tensor T_2 \leq U_1 \tensor U_2 \ofv \delta}
  {\vars \semi \G \vdash T_1 \leq U_1 \ofv \delta \qquad
  \vars \semi \G \vdash T_2 \leq U_2 \ofv \delta}
  \\
  \infer[\hspace*{-1mm}\lolli]
  {\vars \semi \G \vdash
  T_1 \lolli T_2 \leq U_1 \lolli U_2 \ofv \delta}
  {\vars \semi \G \vdash T_1 \leq U_1 \ofv \lnot\delta \qquad
  \vars \semi \G \vdash T_2 \leq U_2 \ofv \delta}
\end{mathpar}

The type $\one$ is only a subtype of itself.
\begin{mathpar}
  \infer[\one]
  {\vars \semi \G \vdash \one \leq \one \ofv \delta}
  {}
\end{mathpar}

Explicit quantifiers preserve the variance for their continuations.
Since explicitly quantified variables are only substituted in the
program, they are stored in $\vars$, renamed with a 
\emph{fresh} (quantified) variable $z$.
\begin{mathpar}
  \infer[\exists^z]
  {\vars \semi \G \vdash
  \texists{x}{T} \leq \texists{y}{U} \ofv \delta}
  {\vars, z \semi \G \vdash
  T[z/x] \leq U[z/y] \ofv \delta}
  \and
  \infer[\forall^z]
  {\vars \semi \G \vdash
  \tforall{x}{T} \leq_\delta \tforall{y}{U}}
  {\vars, z \semi \G \vdash
  T[z/x] \leq_\delta U[z/y]}
\end{mathpar}

For quantified variables, we only relate a variable to itself, in a
context of any variance.
\begin{mathpar}
  \infer[\m{var}]
  {\vars \semi \G \vdash
  x \leq x \ofv \delta}
  {}
\end{mathpar}

So far, we have covered the subtyping of structural types.
The rules for type operators compare the components 
according to the variances of their context.
If the type constructors do not match, or the label sets
do not respect the inclusions (for $\ichoiceop$ and $\echoiceop$),
the subtyping fails having constructed a counterexample to 
simulation. Similarly, two type variables are in a subtyping 
relation if and only if they have the same name, as 
exemplified by the $\m{var}$ rule.

We have one special rule for non-variance: all types are 
related in a context with variance $\bot$.
\begin{mathpar}
  \infer[\nonv]
  {\vars \semi \G \vdash
  T \leq U \ofv \nonv}
  {}
\end{mathpar}

Taking advantage of the invariants established by 
the internal renaming, we can now focus on the subtyping 
of two type names.
When comparing type names, we analyze the following cases:
either we are comparing the 
same type name and, taking advantage of it, we focus on 
their substitutions ($\m{refl}$ rule), 
or we already came across a subtyping relation involving
the same type names and we instantiate the quantified variables, 
preserving the relation ($\m{def}$ rule), or 
we expand their definitions
($\m{expd}$ rule).

In the $\m{expd}$ rule, we expand the definitions of $V_1[\Xi_1]$
and $V_2[\Xi_2]$, keeping the variances of the context and adding 
the closure  $\clo{\vars}{V_1[\theta_1] \leq V_2[\theta_2] \ofv \delta}$
to $\Gamma$. Since the subtyping relation between $V_1[\theta_1]$
and $V_2[\theta_2]$, must hold for all its closed instances,
the extension of $\G$ with the corresponding closure enables to 
remember that.
\begin{mathpar}
  \inferrule*[right = $\m{expd}$]
  {V_1[\Xi_1] = A \in \Sg_0 \quad\,
  V_2[\Xi_2] = B \in \Sg_0 \\
  \vars \semi \G, 
  \clo{\vars}{V_1[\theta_1] \leq V_2[\theta_2] \ofv \delta}
  \vdash A[\theta_1] \leq B[\theta_2] \ofv \delta}
  {\vars \semi \G \vdash
  V_1[\theta_1] \leq V_2 [\theta_2] \ofv \delta}
\end{mathpar}

The $\m{refl}$ rule takes advantage of being comparing the same
type name, and calls the subtyping algorithm on the substitutions. 
While doing so, the algorithm updates
the variance of the context by \emph{nesting} the current variance with
the variances of the type parameters. This rule is where 
the nesting of types \emph{harmonizes} with the nesting of variances.
The judgment for the subtyping
of substitutions is (mutual recursively) defined below. 
\begin{mathpar}
  \inferrule*[right=$\m{refl}$]
  {V[\Xi] = A \in \Sg_0 \and
  \vars \semi \G \vdash \theta_1 \leq \theta_2 \ofv \Xi \mid \delta}
  {\vars \semi \G \vdash
  V[\theta_1] \leq V[\theta_2] \ofv \delta}
\end{mathpar}

The $\m{def}$ rule only applies when there already exists a closure
in $\G$ with the same type names $V_1$ and $V_2$, over a set of 
quantified variables $\vars'$.
In that case, we try to find
a substitution $\sigma$ from $\vars'$ to the type expressions
defined over variables in $\vars$ -- that we denote by 
$\mathcal{T}(\vars)$ -- and such that $V_1 [\theta_1]$ and
$V_1 [\theta_1'[\sigma]]$ preserve the subtyping relation, and 
$V_2 [\theta_2'[\sigma]]$ and $V_2 [\theta_2]$ also preserve the 
subtyping relation, under the same variance.
The substitution $\sigma$ is computed by a standard match algorithm
on first-order terms (which is linear-time), applied to the 
syntactic structure of the types. The existence of such substitution ensures
that any closed instance of 
$\clo{\vars}{V_1[\theta_1] \leq V_2[\theta_2] \ofv \delta}$
is also a closed instance of 
$\clo{\vars'}{V_1[\theta_1'] \leq V_2[\theta_2'] \ofv \delta}$,
and these are already present in the constructed type simulation.
So, we can terminate our subtyping check, having successfully 
detected a loop.
\begin{mathpar}
  \inferrule*[right=$\m{def}$]
  { \clo{\vars'}{V_1[\theta_1'] \leq V_2[\theta_2'] \ofv \delta} \in \G \\
  \exists \sigma : \vars' \to \mathcal{T}(\vars) \text{ s.t.} \\
  \left(
  \vars \semi \G \vdash V_1[\theta_1] \leq V_1[\theta_1'[\sigma]] \ofv \delta \wedge
  \vars \semi \G \vdash V_2[\theta_2'[\sigma]] \leq V_2[\theta_2] \ofv \delta\right)}
  {\vars \semi \G \vdash V_1[\theta_1] \leq V_2[\theta_2] \ofv \delta}
\end{mathpar}

\paragraph{Subtyping of substitutions}
The subtyping relation is easily extended to substitutions.
This judgement
is used in the $\m{refl}$ rule above and is
written as $\vars \semi \G \vdash \theta_1 \leq \theta_2 \ofv \Xi$.
Again, the judgement is defined over a valid signature $\Sg_0$ (that is elided),
a set of free variables $\vars$ with variables
occurring in the types that compose $\theta_1$ and
$\theta_2$, and a collection of subtyping closures $\G$.
The two rules for the subtyping of substitutions
ensure that two substitutions are in a subtyping 
relation on a context of variance $\Xi$ if
the types substituting each type parameter 
$\alpha \ofv \delta\in \Xi$ preserve the subtyping relation
under variance $\delta$. 
\begin{mathpar}
  \infer[(\cdot)]
  {
    \vars \semi \G \vdash (\cdot) \leq (\cdot) \ofv (\cdot)
  }
  {}
  \and
  \infer[\m{subs}]
  {\vars \semi \G \vdash (\theta_1, A/\alpha) \leq (\theta_2, B/\alpha) \ofv (\Xi, \alpha \ofv \delta)}
  {\vars \semi \G \vdash \theta_1 \leq \theta_2 \ofv \Xi
  \and
  \vars \semi \G \vdash A \leq B \ofv \delta}
\end{mathpar}



\begin{lemma}\label{lem:alt}
  In a goal $\vars \semi \G \vdash A \leq B \ofv \delta$,
  either $A$ and $B$ are both structural or both type names.
\end{lemma}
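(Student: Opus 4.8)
The plan is to prove this by structural induction on the derivation of the subtyping goal $\vars \semi \G \vdash A \leq B \ofv \delta$, tracking the invariant established by internal renaming: after renaming, the type grammar splits into structural types $A$ (which are $\ichoiceop$, $\echoiceop$, $\tensor$, $\lolli$, $\one$, $\texists{x}{T}$, $\tforall{x}{T}$, $x$, or $\alpha$) and type names $T = V[\theta]$, with the crucial property that the body of every type-name definition is structural and every continuation position in a structural type is a type name. So my first move is to inspect how the algorithm's goal can be reached: the initial call is made on two type names (since the renaming transformation wraps the originally given types in fresh names), and I claim every recursive call preserves the dichotomy.

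Next I would go rule by rule. The structural rules ($\oplus_\xi$, $\with_\xi$ for each variance, $\tensor$, $\lolli$, $\one$, $\exists^z$, $\forall^z$, $\m{var}$) all have conclusions matching two structural types; for the rules with premises ($\oplus$, $\with$, $\tensor$, $\lolli$, quantifiers), the premises compare the continuation components $T_\ell, U_m, T_i, U_i$, which by the grammar of renamed structural types are themselves type names — so the recursive goals are name-vs-name, maintaining the invariant. The $\nonv$ rule is vacuous since it has no premises. For the type-name rules: $\m{refl}$ has a name-vs-name conclusion $V[\theta_1] \leq V[\theta_2]$ and delegates to subtyping of substitutions, which (by the $\m{subs}$ rule) recurses on $A \leq B$ where $A/\alpha, B/\alpha$ come from $\theta_1,\theta_2$ — and after internal renaming the substitution entries are themselves type names, so again name-vs-name; $\m{def}$ has a name-vs-name conclusion and its side conditions invoke name-vs-name goals $V_1[\theta_1] \leq V_1[\theta_1'[\sigma]]$ and $V_2[\theta_2'[\sigma]] \leq V_2[\theta_2]$; and $\m{expd}$ has a name-vs-name conclusion $V_1[\theta_1] \leq V_2[\theta_2]$ but its premise $A[\theta_1] \leq B[\theta_2]$ where $V_1[\Xi_1] = A$ and $V_2[\Xi_2] = B$ — and here the validity of the signature guarantees $A$ and $B$ are structural (the $\m{def}$ rule for valid signatures enforces $A \neq V'[\theta]$), and substitution by $\theta_1, \theta_2$ does not change whether the top-level type constructor is structural. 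So $\m{expd}$ goes name-vs-name to structural-vs-structural, never mixing.

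The only case requiring genuine care — and the one I'd flag as the main obstacle — is confirming that \emph{no rule ever produces a mixed goal} (one structural, one type name). This reduces to checking two things: (a) that the algorithm is only ever invoked on goals arising from the cases above, i.e., that the top-level renamed types are both names, and (b) that for the rules whose conclusion is a structural-vs-structural comparison, both sides genuinely have matching shape — but this is not something the lemma asserts; rather, the lemma only claims the structural/name classification agrees, not that the constructors match. A mixed goal (say $\ichoice{\dots} \leq V[\theta]$) simply has no applicable rule, so the algorithm would fail; the lemma asserts this situation never arises on reachable goals. The clean way to phrase the induction is therefore: take as the induction hypothesis that on every goal the algorithm actually evaluates, $A$ and $B$ are either both structural or both names, and verify that each rule, when its conclusion satisfies this invariant, generates only premises that also satisfy it — using Lemma on properties of valid signatures (the body of a definition is structural) for the $\m{expd}$ case and the renamed grammar (continuation positions and substitution entries are type names) for all the others. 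The base cases ($\one$, $\m{var}$, $\nonv$, and $(\cdot)$ for substitutions) are immediate.
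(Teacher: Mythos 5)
Your proof is correct and takes essentially the same route as the paper, which argues by induction on the algorithmic subtyping rules using exactly the two facts you isolate: contractivity of type definitions (so the bodies unfolded by $\m{expd}$ are structural) and the internal-renaming invariant that every continuation and substitution entry is a type name. The paper states this only as a one-line sketch; your rule-by-rule case analysis is a faithful expansion of it.
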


\begin{proof}
  By induction on the algorithmic subtyping rules, using the fact that
  type definitions are contractive and that after internal renaming 
  every continuation becomes by a type name.
\end{proof}

\begin{example}
To check that $\mathsf{List}[\mathsf{nat}] \lolli \mathsf{List}[\mathsf{nat}]$
is a subtype of $\mathsf{List}[\mathsf{even}] \lolli \mathsf{List}[\mathsf{nat}]$
in a context of variance $\cov$,
formally written as 
\begin{equation*}
	\cdot \semi \cdot \vdash \mathsf{List}[\theta_{\mathsf{nat}}] \lolli \mathsf{List}[\theta_{\mathsf{nat}}] 
	\leq \mathsf{List}[\theta_{\mathsf{even}}] \lolli \mathsf{List}[\theta_{\mathsf{nat}}]\ofv \cov
\end{equation*}
where $\theta_{\mathsf{nat}} = (\mathsf{nat}/\alpha)$ and 
$\theta_{\mathsf{even}} = (\mathsf{even}/\alpha)$,
we use the signature of Example~\ref{ex:signature_list_nat}
extended with definitions for $\mathsf{even}$ and $\mathsf{odd}$.
Using the subtyping rule for $\lolli$ we need to check that
$\cdot \semi \cdot \vdash \mathsf{List}[\theta_{\mathsf{nat}}] 
 \leq \mathsf{List}[\theta_{\mathsf{even}}] \ofv \conv$
and
$\cdot \semi \cdot \vdash \mathsf{List}[\theta_{\mathsf{nat}}] \leq \mathsf{List}[\theta_{\mathsf{nat}}] \ofv \cov$.
Focusing on the former and using the $\m{expd}$ rule we have
\begin{equation*}
\begin{array}{lll}
	\cdot \semi 
\clo{\cdot}{\mathsf{List}[\theta_{\mathsf{nat}}]  \leq \mathsf{List}[\theta_{\mathsf{even}}]  \ofv \conv}
 \vdash \\
 \hspace*{2cm}\ichoice{\mathbf{nil} : X_2, \mathbf{cons} : X_3[\theta_{\mathsf{nat}}]} \leq
 \ichoice{\mathbf{nil} : X_2, \mathbf{cons} : X_3[\theta_{\mathsf{even}}]} \ofv \conv
	
\end{array}
\end{equation*}
which then compares the continuations. For branch $\mathbf{cons}$ we get
\begin{equation*}
	\cdot \semi 
\clo{\cdot}{\mathsf{List}[\theta_{\mathsf{nat}}]  \leq \mathsf{List}[\theta_{\mathsf{even}}]  \ofv \conv}
 \vdash  X_3[\theta_{\mathsf{nat}}] \leq
 X_3[\theta_{\mathsf{even}}] \ofv \conv
\end{equation*}
to which we now apply the $\m{refl}$ rule to get
\begin{equation*}
	\cdot \semi 
\clo{\cdot}{\mathsf{List}[\theta_{\mathsf{nat}}]  \leq \mathsf{List}[\theta_{\mathsf{even}}]  \ofv \conv}
 \vdash  \theta_{\mathsf{nat}} \leq
 \theta_{\mathsf{even}} \ofv (\alpha\ofv \cov) \mid \conv \; .
\end{equation*}
Recalling that $(\alpha\ofv \cov) \mid \conv = (\alpha\ofv \conv)$, 
the rules for subtyping on substitutions reduce us to
\begin{equation*}
	\cdot \semi 
\clo{\cdot}{\mathsf{List}[\theta_{\mathsf{nat}}]  \leq \mathsf{List}[\theta_{\mathsf{even}}]  \ofv \conv}
 \vdash  \mathsf{nat} \leq
 \mathsf{even} \ofv \conv \; .
\end{equation*}
Recalling that $	\mathsf{nat} = \ichoice{\mathbf{z} : X_1, \mathbf{s} : \mathsf{nat}}$
and 
$	\mathsf{even} = \ichoice{\mathbf{z} : X_5, \mathbf{s} : \mathsf{odd}}$
and 
$	\mathsf{odd} = \ichoice{\mathbf{s} : \mathsf{even}}$,
with 
$X_5 = \one$,
we expand the definitions and enrich the context ($\m{expd}$ rule).
Then,
the relations for branch $\mathbf{z}$, eventually follow from the rule for $\one$,
and for the $\mathbf{s}$ branch we proceed with 
\begin{equation*}
	\cdot \semi 
\clo{\cdot}{\mathsf{List}[\theta_{\mathsf{nat}}]  \leq \mathsf{List}[\theta_{\mathsf{even}}]  \ofv \conv},
\clo{\cdot}{\mathsf{nat} \leq \mathsf{even} \ofv \conv }
 \vdash  \mathsf{nat} \leq
 \mathsf{odd} \ofv \conv \; .
\end{equation*}
After expanding, note that, although we have an internal choice,
we are in a context with negative variance and, thus,
the subset relation in the labels is reverted (according to rule $\ichoiceop_{\cov}$).
We proceed with a comparison of the (single) branch $\mathsf{s}$, which enables us to conclude.
\begin{equation*}
\begin{array}{lll}
		\cdot \semi 
\clo{\cdot}{\mathsf{List}[\theta_{\mathsf{nat}}]  \leq \mathsf{List}[\theta_{\mathsf{even}}]  \ofv \conv},
\clo{\cdot}{\mathsf{nat} \leq \mathsf{even} \ofv \conv },
\clo{\cdot}{\mathsf{nat} \leq \mathsf{odd} \ofv \conv}
 \vdash \qquad \\
 \hfill \mathsf{nat} \leq \mathsf{even} \ofv \conv \; .
\end{array}
\end{equation*}
\end{example}

\subsection{Soundness of subtyping}
\label{subsec:soundness_subtp}

We prove the soundness of the subtyping algorithm by constructing
a simulation from a derivation $\vars \semi \G \vdash A \leq B \ofv \delta $
by \emph{(i)} collecting the conclusions of all the sequents,
and \emph{(ii)} forming all closed instances from them.

\begin{definition}
  Given a derivation $\mathcal{D}_0$ of
  $\vars_0 \semi \G_0 \vdash A_0 \leq B_0 \ofv \delta $, we define the set
  $\mathcal{S}(\mathcal{D}_0)$ of closures.  For each sequent
  of the form
  $\vars \semi \G \vdash A \leq B \ofv \delta $ in $\mathcal{D}_0$,
  we include the closure
  $\clo{\vars}{A \leq B \ofv \delta }$ in $\mathcal{S}(\mathcal{D}_0)$.
\end{definition}

\begin{lemma}[Closure Invariants]\label{lem:closure}
  For any valid derivation $\mathcal{D}$ with the set of closures $\SD$,
  \begin{itemize}
    \item If $\clo{\vars}{\ichoice{\ell : T_\ell}_{\ell \in L} \leq
    \ichoice{m : U_m}_{m \in M}\ofv \cov} \in \SD$ from $\oplus_{\cov}$ rule, then
    $L\subseteq M$ and $\clo{\vars}{T_\ell \leq U_\ell \ofv \cov} \in \SD$ for all $\ell \in L$.
    
    \item If $\clo{\vars}{\ichoice{\ell : T_\ell}_{\ell \in L} \leq
    \ichoice{m : U_m}_{m \in M}\ofv \conv} \in \SD$ from $\oplus_{\conv}$ rule, then
    $L\supseteq M$ and $\clo{\vars}{T_m \leq U_m \ofv \conv} \in \SD$ for all $m \in M$.
    
    \item If $\clo{\vars}{\ichoice{\ell : T_\ell}_{\ell \in L} \leq
    \ichoice{m : U_m}_{m \in M}\ofv \biv} \in \SD$ from $\oplus_{\biv}$ rule, then
    $L = M$ and $\clo{\vars}{T_\ell \leq U_\ell \ofv \biv} \in \SD$ for all $\ell \in L$.
    
    \item If $\clo{\vars}{\echoice{\ell : T_\ell}_{\ell \in L} \leq
    \echoice{m : U_m}_{m \in M}\ofv \cov} \in \SD$ from $\with_{\cov}$ rule, then
    $L\supseteq M$ and $\clo{\vars}{T_m \leq U_m \ofv \cov} \in \SD$ for all $m \in M$.
    
    \item If $\clo{\vars}{\echoice{\ell : T_\ell}_{\ell \in L} \leq
    \echoice{m : U_m}_{m \in M}\ofv \conv} \in \SD$ from $\with_{\conv}$ rule, then
    $L\subseteq M$ and $\clo{\vars}{T_\ell \leq U_\ell \ofv \conv} \in \SD$ for all $\ell \in L$.
    
    \item If $\clo{\vars}{\echoice{\ell : T_\ell}_{\ell \in L} \leq
    \echoice{m : U_m}_{m \in M}\ofv \biv} \in \SD$ from $\with_{\biv}$ rule, then
    $L = M$ and $\clo{\vars}{T_\ell \leq U_\ell \ofv \biv} \in \SD$ for all $\ell \in L$.

    \item If $\clo{\vars}{T_1 \tensor T_2 \leq
    U_1 \tensor U_2 \ofv \delta } \in \SD$ from $\tensor$ rule, then
    $\clo{\vars}{T_1 \leq U_1 \ofv \delta} \in \SD$ and
    $\clo{\vars}{T_2 \leq U_2 \ofv \delta} \in \SD$.

    \item If $\clo{\vars}{T_1 \lolli T_2 \leq
    U_1 \lolli U_2\ofv \delta} \in \SD$ from $\lolli$ rule, then
    $\clo{\vars}{T_1 \leq U_1\ofv \neg\delta} \in \SD$ and
    $\clo{\vars}{T_2 \leq U_2\ofv \delta} \in \SD$.
    
    \item If $\clo{\vars}{\texists{x}T \leq
    \texists{y}U\ofv \delta} \in \SD$ from $\exists^z$ rule, then
     $\clo{\vars,z}{T[z/x] \leq
    U[z/y]\ofv \delta} \in \SD$.
    
    \item If $\clo{\vars}{\tforall{x}T \leq
    \tforall{y}U\ofv \delta} \in \SD$ from $\forall^z$ rule, then
     $\clo{\vars,z}{T[z/x] \leq
    U[z/y]\ofv \delta} \in \SD$.
    
    \item If $\clo{\vars}{V_1[\theta_1] \leq V_2[\theta_2]\ofv \delta} \in \SD$
    and $V_1[\Xi_1]=A \in \Sg_0$
    and $V_2[\Xi_2]=B \in \Sg_0$ from $\m{expd}$ rule, then
    $\clo{\vars}{A[\theta_1] \leq B[\theta_2]\ofv \delta} \in \SD$.

    \item If $\clo{\vars}{V[\theta_1] \leq V[\theta_2]\ofv \delta} \in \SD$
    and $V[\Xi]=A \in \Sg_0$
    from $\m{refl}$ rule, then 
    $\clo{\vars}{A \leq B\ofv \xi\mid\delta} \in \SD$
    for all  $A/\alpha\in\theta_1$ and
  $B/\alpha\in\theta_2$ and $\alpha\ofv\xi\in \Xi$.

    \item If $\clo{\vars}{V_1[\theta_1] \leq V_2[\theta_2]\ofv \delta} \in \SD$
    and $\clo{\vars'}{V_1[\theta_1'] \leq V_2[\theta_2']\ofv \delta} \in \SD$
    from $\m{def}$ rule, then exists $\sigma : \vars' \to \mathcal{T}(\vars)$ s.t.
    $\clo{\vars}{V_1[\theta_1] \leq V_1[\theta_1'[\sigma]]\ofv \delta} \in \SD$
    and
    $\clo{\vars}{V_2[\theta_2'[\sigma]]\leq V_2[\theta_2]\ofv \delta} \in \SD$
  \end{itemize}
\end{lemma}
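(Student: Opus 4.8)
The proof is essentially definitional: each clause names a rule, and the claim is just that the immediate premises of that rule (after peeling off the auxiliary substitution-subtyping judgment in the $\m{refl}$ case) are exactly the sequents whose closures are asserted to lie in $\SD$. So the plan is to record one structural observation and then walk through the thirteen clauses.

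The structural observation is this: by the definition of $\SD$, it contains the closure $\clo{\vars}{A \leq B \ofv \delta}$ of \emph{every} plain subtyping sequent $\vars \semi \G \vdash A \leq B \ofv \delta$ that occurs anywhere in $\mathcal{D}$. Since a derivation is a well-founded tree, whenever an algorithmic rule is applied with conclusion $\mathcal{J}$ and premises $\mathcal{J}_1,\dots,\mathcal{J}_n$, each $\mathcal{J}_i$ is the root of a subderivation of $\mathcal{D}$ and hence occurs in $\mathcal{D}$; if moreover $\mathcal{J}_i$ is a plain subtyping sequent, its closure is in $\SD$. I will also note that every closure appearing in any constraint context $\G$ within $\mathcal{D}$ was placed there by an $\m{expd}$ step whose conclusion is a plain subtyping sequent of $\mathcal{D}$; hence such closures are in $\SD$ too, which is what makes the second hypothesis of the $\m{def}$ clause automatically available.

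Given this, I go through the clauses. For the closures arising from $\oplus_{\cov}$, $\oplus_{\conv}$, $\oplus_{\biv}$, $\with_{\cov}$, $\with_{\conv}$, $\with_{\biv}$, $\tensor$, $\lolli$, $\exists^z$, $\forall^z$, and $\m{expd}$, the premises of the corresponding rule are precisely the plain subtyping sequents the clause mentions, with the side conditions ($L \subseteq M$, $L \supseteq M$, or $L = M$), the variance change $\neg\delta$ on the first component of $\lolli$, and the freshly renamed $z$ tracked in $\vars$ all read off directly from the rule; so these clauses are immediate from the structural observation. The rules $\one$, $\m{var}$, and $\nonv$ have no premises and so contribute no clause. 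For the $\m{def}$ clause, the $\m{def}$-rule occurrence producing the closure supplies the matching substitution $\sigma$, and its two conjunctive premises $\vars \semi \G \vdash V_1[\theta_1] \leq V_1[\theta_1'[\sigma]] \ofv \delta$ and $\vars \semi \G \vdash V_2[\theta_2'[\sigma]] \leq V_2[\theta_2] \ofv \delta$ are plain subtyping sequents of $\mathcal{D}$, so their closures are in $\SD$.

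The one clause needing a little care, and the main (mild) obstacle, is the $\m{refl}$ clause: the single premise of $\m{refl}$ is the \emph{substitution}-subtyping judgment $\vars \semi \G \vdash \theta_1 \leq \theta_2 \ofv \Xi \mid \delta$, which is not itself of the shape covered by the definition of $\SD$. To extract plain sequents from it I peel off its derivation, which can only use the $(\cdot)$ and $\m{subs}$ rules; by a straightforward induction on the length of $\Xi$ (equivalently, on that derivation), each $\m{subs}$ step exposes a plain premise $\vars \semi \G \vdash A \leq B \ofv \delta'$ for the component $A/\alpha \in \theta_1$, $B/\alpha \in \theta_2$, where $\delta'$ is the variance $\alpha$ receives in $\Xi \mid \delta$. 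By the definition of the nesting operator on variance contexts, that variance is exactly $\xi \mid \delta$ when $\alpha \ofv \xi \in \Xi$. Each such sequent occurs in $\mathcal{D}$, so its closure $\clo{\vars}{A \leq B \ofv \xi \mid \delta}$ is in $\SD$, which is the $\m{refl}$ clause. No genuine induction over $\mathcal{D}$ is required; the only inductive step is this bookkeeping over the shape of a substitution-subtyping derivation.
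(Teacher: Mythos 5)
Your proposal is correct and matches the paper's proof, which is stated simply as ``by induction on the subtyping judgement'': in both cases each clause is read off from the premises of the corresponding rule application, since $\SD$ contains the closure of every plain subtyping sequent occurring in the derivation. Your explicit treatment of the $\m{refl}$ case---unwinding the auxiliary substitution-subtyping derivation via the $\m{subs}$ rule to recover the componentwise sequents at variance $\xi \mid \delta$---is exactly the detail the paper leaves implicit.
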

\begin{proof}
	By induction on the subtyping judgement.
\end{proof}

Now, we prove that the subtyping algorithm that we propose is
sound.

\begin{theorem}
\label{thm:soundness}
  For valid types $A$ and $B$ s.t.
  $\vars \semi \cdot \vdash A \ofv \delta$
  and $\vars \semi \cdot \vdash B \ofv \delta$, if
  the subtyping algorithm proves
  $\vars \semi \cdot \vdash A \leq B \ofv \delta$,
  then $\forall \vars.\, A \leq B \ofv \delta$.
\end{theorem}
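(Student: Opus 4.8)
The goal is to show algorithmic subtyping implies semantic subtyping of open types. Following Definition~\ref{def:subtp}, I must exhibit a single type simulation $\rel$ such that $(A[\sigma], B[\sigma]) \in \rel^\delta$ for every closed substitution $\sigma$ over $\vars$. The natural candidate is built from the derivation $\mathcal{D}$ of $\vars\semi\cdot\vdash A\leq B\ofv\delta$: take $\SD$, the set of closures collected by the definition preceding the theorem, and close it up under all closed instances and under the variance-direction operation. Concretely, I would define
\begin{equation*}
  \rel = \{ (C[\sigma], D[\sigma]) \mid \clo{\vars'}{C \leq D \ofv \cov} \in \SD,\ \sigma \text{ closed over } \vars' \}
  \cup \{ (D[\sigma], C[\sigma]) \mid \clo{\vars'}{C \leq D \ofv \conv} \in \SD, \dots \}
\end{equation*}
and for bivariant closures include both orientations, and for $\nonv$-closures include all closed pairs (matching $\rel^\nonv$). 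Equivalently: for each closure $\clo{\vars'}{C\leq D\ofv\delta'}\in\SD$ and each closed $\sigma$, put the pair $(C[\sigma],D[\sigma])$ into $\rel^{\delta'}$ (reading that membership through Definition~\ref{def:variance_based_rel}). Then $(A[\sigma],B[\sigma])\in\rel^\delta$ follows because the root sequent's closure $\clo{\vars}{A\leq B\ofv\delta}$ is in $\SD$.

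\textbf{Key steps.} First, establish that $\rel$ (the covariant version, $\rel^\cov$) is a type simulation in the sense of Definition~\ref{def:subtp_rel}. I would take an arbitrary pair $(C[\sigma], D[\sigma])\in\rel$ arising from a closure $\clo{\vars'}{C\leq D\ofv\delta'}\in\SD$ with the appropriate orientation, compute $\unfold{C[\sigma]}$, and do a case analysis on the outermost structure. By Lemma~\ref{lem:alt}, $C$ and $D$ are both structural or both type names. In the structural case, the shape of $C$ is determined, and the corresponding rule ($\oplus_\cov$, $\lolli$, $\one$, $\exists^z$, etc.) was used to derive that closure; Lemma~\ref{lem:closure} then tells me exactly which sub-closures are in $\SD$, and each matches precisely the requirement of the simulation condition for that connective, after applying $\sigma$ and (for quantifiers) an extra closed instance for the freshly introduced $z$. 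In the type-name case I unfold once more using the $\m{expd}$, $\m{refl}$, or $\m{def}$ clause of Lemma~\ref{lem:closure}: for $\m{expd}$ I pass to $\clo{\vars}{A[\theta_1]\leq B[\theta_2]\ofv\delta}$, which has the same closed instances; for $\m{refl}$ I use that $\unfold{V[\theta_1[\sigma]]}$ and $\unfold{V[\theta_2[\sigma]]}$ have the same top connective and that the sub-closures on the substitution components land in $\rel$ at the nested variance $\xi\mid\delta'$ — here I invoke the variance-nesting lemmas and Definition~\ref{def:variance_based_rel} to see that a component-closure at variance $\xi\mid\delta'$ contributes the correctly-oriented pair; for $\m{def}$ I use the matching substitution $\sigma'$ to show the closed instance of $\clo{\vars}{V_1[\theta_1]\leq V_2[\theta_2]\ofv\delta}$ is also a closed instance of the earlier closure $\clo{\vars'}{V_1[\theta_1']\leq V_2[\theta_2']\ofv\delta}$, which is already in $\SD$, so no new obligation arises. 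Since $\m{expd}$ and $\m{refl}$ strictly decrease something (structural size of the unfolded type, using contractivity so a type name is never immediately followed by another), this unfolding-chasing terminates at a structural case. Second, once $\rel^\cov$ is a simulation, I simply observe that the definitions of $\rel^\conv,\rel^\biv,\rel^\nonv$ via Definition~\ref{def:variance_based_rel} are exactly how the contravariant/bivariant/nonvariant closures were placed, so $(A[\sigma],B[\sigma])\in\rel^\delta$ as required, giving $\forall\vars.\,A\leq B\ofv\delta$.

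\textbf{Main obstacle.} The delicate part is the $\m{refl}$ case together with the bookkeeping of variances: I must check that a sub-closure recorded at variance $\xi\mid\delta'$ (for a parameter $\alpha\ofv\xi$ of $V$, in a context of variance $\delta'$) contributes to $\rel$ a pair oriented so that it satisfies the simulation condition that $\unfold{V[\theta_1[\sigma]]} \mathrel{(\mathord\leq)} \unfold{V[\theta_2[\sigma]]}$ demands at its continuation positions. Unfolding $V[\theta_i[\sigma]]$ substitutes $\theta_i$ into the body $A_V$, so the continuation types are of the form $A_V[\theta_i]$, and I need that $(A_V[\theta_1[\sigma]], A_V[\theta_2[\sigma]])\in\rel$ whenever each component pair $(C_1,C_2)$ with $C_j/\alpha\in\theta_j$ lies in $\rel^\xi$. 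This is really a monotonicity-of-simulation-under-substitution fact; proving it cleanly likely requires either a separate lemma (simulations are closed under congruence with respect to variance-respecting substitutions) or folding this reasoning directly into the coinductive bookkeeping of $\rel$ — I would enlarge $\rel$ to be closed under exactly such congruences so the verification of the simulation conditions goes through. The other potential snag is the $\m{def}$ case, where I must ensure the match substitution $\sigma'$ genuinely makes every closed instance of the new closure a closed instance of the old one; this is precisely the invariant stated for the rule, and Lemma~\ref{lem:closure}'s $\m{def}$ clause packages it, so it should be a matter of composing substitutions carefully rather than a genuine difficulty.
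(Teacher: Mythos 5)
Your proposal follows essentially the same route as the paper's proof: build the variance-oriented relation from all closed instances of the closures $\SD$ collected from the derivation, verify the simulation conditions by case analysis on the rule that produced each closure via Lemma~\ref{lem:closure}, and handle the $\m{refl}$/$\m{def}$ cases by enlarging the relation with a congruence closure under type-name application with variance-respecting component pairs (the paper's $\rel_i$ layers) together with reflexive--transitive closure. The obstacle you flag as the delicate point is exactly the one the paper resolves with that enlargement, so your plan matches the published argument.
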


\begin{proof}
  Given $A_0, B_0$ s.t. 
  $\vars_0 \semi \cdot \vdash A_0 \ofv \delta$
  and $\vars_0 \semi \cdot \vdash B_0 \ofv \delta$ and a 
  derivation $\mathcal{D}_0$ of $\vars_0 \semi \cdot
  \vdash A_0 \leq B_0 \ofv \delta$, construct the set
  of closures $\mathcal{S}(\mathcal{D}_0)$. Define a relation
  $\rel_0$ such that
  \begin{equation*}
  \begin{array}{lll}
  	    \rel_0  =  \{(A[\sigma], B[\sigma]) \mid \clo{\vars}{A \leq B \ofv \cov}
    \in \mathcal{S}(\mathcal{D}_0), \text{ $\sigma$ closed substitution  over $\vars$}\}\\
     \cup  \{(B[\sigma], A[\sigma]) \mid \clo{\vars}{A \leq B \ofv \conv}
    \in \mathcal{S}(\mathcal{D}_0) \text{ and $\sigma$ closed substitution over $\vars$}\}\\
     \cup  \{(A[\sigma], B[\sigma]), (B[\sigma], A[\sigma])
    \mid \clo{\vars}{A \leq B \ofv \biv}
    \in \mathcal{S}(\mathcal{D}_0), \text{ $\sigma$ closed subst. over $\vars$}\}
  \end{array}
  \end{equation*} 
  Recall the variance-based relation defined in Definition~\ref{def:variance_based_rel}
  and note that $\clo{\vars}{A \leq B \ofv \delta}
  \in \mathcal{S}(\mathcal{D}_0)$ iff $(A[\sigma],B[\sigma])\in\rel_0^\delta$,
  for any closed substitution $\sigma$ over $\vars$.
  Then, let $\rel$ be the reflexive and transitive closure of $ \bigcup_{i\geq 0} \rel_i$ 
  where, for $i\geq 1$:
	\begin{equation*}
	\begin{array}{llll}
		\rel_i = \{(V[\theta_1], V[\theta_2]) \mid & V[\Xi] = C \in \Sg_0 \text{ and }
	(\alpha \ofv \xi) \in \Xi \Rightarrow\\
    & (A/\alpha) \in \theta_1 \text{ and }
    (B/\alpha) \in \theta_2 \; \forall (A,B) \in R_{i-1}^{\xi}\}
	\end{array}
	\end{equation*}
%
  We now prove that $\rel$ is a type simulation. Then, our theorem follows because
	$\vars_0 \semi \cdot \vdash A_0 \leq B_0 \ofv \delta$ implies  
  $\clo{\vars_0}{A_0 \leq B_0 \ofv \delta} \in \mathcal{S}(\mathcal{D}_0)$.
  This means that, for any closed substitution $\sigma$ over $\vars_0$, we have
  $(A_0[\sigma], B_0[\sigma]) \in \rel^\delta$, proving to be in the
  conditions of Definition~\ref{def:subtp}.

  Note that extending a relation by its reflexive and transitive closure
  preserves the simulation properties.
  To prove that $\rel$ is a type simulation, we
  consider $(A[\sigma], B[\sigma]) \in \rel^\delta$ where
  $\clo{\vars}{A \leq B \ofv \delta} \in \mathcal{S}(\mathcal{D}_0)$ and
  $\sigma$ is a substitution over $\vars$. We case analyze on the rule in the
  derivation which added the above closure to $\rel$. 
  We detail the most representative cases; some other cases can be found in
  the supplementary material.

  Consider the case where $\ichoiceop_\conv$ rule is applied
  to add $(B[\sigma], A[\sigma]) \in \rel$, i.e.,
  $(A[\sigma], B[\sigma]) \in \rel^\conv$.
  The rule dictates that $A = \ichoice{\ell : A_\ell}_{\ell \in L}$
  and $B = \ichoice{m : B_m}_{m \in M}$. Since
  $\clo{\vars}{A \leq B \ofv \conv} \in \mathcal{S}(\mathcal{D}_0)$,
  by Lemma~\ref{lem:closure}, we know that $L \supseteq M$ and
  $\clo{\vars}{A_m \leq B_m \ofv \conv} \in \mathcal{S}(\mathcal{D}_0)$ for all
  $m \in M$. By definition of $\rel$, we get 
  $(B_m[\sigma], A_m[\sigma]) \in \rel$. Also,
  $A[\sigma] = \ichoice{\ell : A_\ell[\sigma]}_{\ell \in L}$ and
  $B[\sigma] = \ichoice{m : B_m[\sigma]}_{m \in M}$.  Hence, $\rel$ satisfies the
  first condition of Definition~\ref{def:subtp_rel}.
  Cases for $\ichoiceop_\cov$, $\ichoiceop_\biv$, $\echoiceop_\cov$,
  $\echoiceop_\conv$, $\echoiceop_\nonv$ are analogous.


  If the applied rule is $\lolli$, then $A = A_1 \lolli A_2$
  and $B = B_1 \lolli B_2$. Lemma~\ref{lem:closure} implies
  that $\clo{\vars}{A_1 \leq B_1 \ofv \neg \delta} \in \mathcal{S}(\mathcal{D}_0)$
  and $\clo{\vars}{A_2 \leq B_2 \ofv \delta} \in \mathcal{S}(\mathcal{D}_0)$.
  We thus get $(A_1[\sigma],B_1[\sigma]) \in \rel^{\neg\delta}$
  and $(A_2[\sigma], B_2[\sigma]) \in \rel^\delta$.
  From which we conclude that $(B_1[\sigma], A_1[\sigma]) \in \rel^\delta$ 
  and $\rel$ satisfies the third
  closure condition from Definition~\ref{def:subtp_rel}.
  The case of $\tensor$ is analogous. The case for $\one$
  is trivial.

  If the applied rule is $\exists^z$, then $A = \texists{x} A'$
  and $B = \texists{y} B'$ and $(A[\sigma], B[\sigma]) \in \rel^\delta$.
  Lemma~\ref{lem:closure} implies
  that $\clo{\vars,z}{A'[z/x] \leq B'[z/y] \ofv \delta} \in
  \mathcal{S}(\mathcal{D}_0)$.
  Thus, the definition of $\rel$ ensures that $(A'[z/x][\sigma'], B'[z/y][\sigma']) \in \rel^\delta$
  for any closed $\sigma'$ over $\vars, z$.
  Also, $A[\sigma] = \texists{x} A'[\sigma]$ and
  $B[\sigma] = \texists{y} B'[\sigma]$.
  To prove that we have 
  $(A'[\sigma][C/x], B'[\sigma][C/y]) \in \rel^\delta$ for any $C \in \mi{Type}$,
  we set $\sigma' = \sigma, C/z$, and get
  $A'[z/x][\sigma'] = A'[z/x][\sigma,C/z] = A'[\sigma][C/x]$.
  Similarly, $B'[z/x][\sigma'] = B'[\sigma][C/y]$.
  Combining, we get that $(A'[\sigma][C/x], B'[\sigma][C/y]) \in \rel^\delta$.
  The case for $\forall^z$ is analogous.

  If the applied rule is $\m{var}$, then $A = x$ and $B = x$.
  In this case, the relation $\rel$ contains any $(C, C)$ for
  a closed session type $C$.
  Since the next applied rule has to be a structural rule,
  we can use Lemma~\ref{lem:closure} again to obtain the closure conditions
  of a type simulation.
  If the applied rule is $\nonv$, nothing is added to $\rel$;
  thus, the result holds vacuously. 

  When the $\m{expd}$ rule is applied,
  $A = V_1[\theta_1]$ and $B = V_2[\theta_2]$ and $(A[\sigma], B[\sigma]) \in \rel^\delta$ with 
  definitions $V_1[\Xi_1] = C_1$ and $V_2[\Xi_2] = C_2$.
  From Lemma~\ref{lem:closure},
  we have $\clo{\vars}{C_1[\theta_1] \leq
  C_2[\theta_2] \ofv \delta} \in \SDO$.
  Again, the next applied rule is a structural rule,
  so we use Lemma~\ref{lem:closure} to obtain the closure conditions.

  When the applied rule is $\m{def}$,
  we have
  $	\clo{\vars}{A \leq B \ofv \delta} \in \SDO$,
  with
  $A = V_1[\theta_1]$ and 
  $B = U_1[\theta_2]$,
  and definitions 
  	$V_1[\Xi_1] = C$ and 
  	$U_1[\Xi_2] = D$.
 We also have $\clo{\vars'}{V_1[\theta_1'] \leq U_1[\theta_2']\ofv \delta} \in \SDO$.
  The types $\unfold{A}$ and $\unfold{B}$ are structural. 
  Due to the internal renaming, 
  the continuation types for $C$ and $D$ are type names, 
  say 
  $V_2[\theta_{A}]$ and
  $U_2[\theta_{B}]$, respect., occurring in a position of variance $\xi$.
  Recalling that $(A[\sigma], B[\sigma]) \in \rel^\delta$,
  we want to prove that:
  $
  	(V_{2}[\theta_{A}[\theta_1]][\sigma], U_{2}[\theta_{B}[\theta_2]][\sigma])\in \rel^{\delta | \xi}.
  $
  From Lemma~\ref{lem:closure}, we know that 
  exists $\sigma' : \vars' \to \mathcal{T}(\vars)$ s.t.
  $\clo{\vars}{V_1[\theta_1] \leq V_1[\theta_1'[\sigma']] \ofv \delta},
  	\clo{\vars}{U_1[\theta_2'[\sigma']] \leq U_1[\theta_2] \ofv \delta} \in \SDO$.
  Hence, we have
  $
  	(V_1 [\theta_1[\sigma]], V_1 [\theta_1'[\sigma'[\sigma]]]),
  	(U_1 [\theta_2'[\sigma'[\sigma]]],U_1 [\theta_2[\sigma]])\in \rel^\delta.
  $
  By applying successive simulation steps on these types, we eventually 
  reach the type variables occurring in $C$ and $D$. Thus, we know that:
  \begin{equation}
  \label{eq:thetas_in_rel}
  \begin{array}{ll}
  	(A_1[\sigma], A_1'[\sigma'[\sigma]])\in \rel^{\delta\mid \xi_1}
  	\text{ for all $A_1/\alpha\in\theta_1$ and
  $A_1'/\alpha\in\theta_1'$ and $\alpha\ofv\xi_1\in \Xi_1$}\\
   (A_2'[\sigma'[\sigma]], A_2[\sigma])\in \rel^{\delta\mid \xi_2}
   \text{ for all $A_2'/\beta\in \theta_2'$
  and $A_2/\beta\in \theta_2$ and $\beta\ofv\xi_2\in\Xi_2$}
  \end{array}
  \end{equation}
  The internal renaming ensures that
  $\theta_{A}$ and $\theta_{B}$ are a nesting of type names. Hence,
  from~\eqref{eq:thetas_in_rel} and definition of $\mathcal{R}_i$, we conclude that:
  for some $i \geq 1$,
  \begin{equation}
  \label{eq:bisim_step}
  	(V_{2}[\theta_{A}[\theta_1[\sigma]]], V_{2}[\theta_{A}[\theta_1'[\sigma'[\sigma]]]]),
  	(U_{2} [\theta_{B}[\theta_2'[\sigma'[\sigma]]]], U_{2} [\theta_{B}[\theta_2[\sigma]]])\in 
  	\rel_i^{\delta\mid\xi}.
  \end{equation}
  Since we also have 
  $\clo{\vars'}{V_1[\theta_1'] \leq U_1[\theta_2']\ofv \delta} \in \SDO$, 
  we know that the pair
  $(V_1[\theta_1'][\sigma'[\sigma]], U_1[\theta_2'][\sigma'[\sigma]])\in \rel^{\delta}.$
  Proceeding to the continuation types $V_2[\theta_{A}]$ and
  $U_2[\theta_{B}]$ (occurring at variance $\xi$) and using~\eqref{eq:bisim_step}
  and transitivity,
  we conclude that $	(V_{2}[\theta_{A}[\theta_1]][\sigma], 
  U_{2}[\theta_B[\theta_2]][\sigma])\in \rel^{\delta | \xi}$, as we wanted.

  The last two cases concern reflexivity, one that comes directly from
  the closure obtained from applying the $\m{refl}$ rule, and the other
  comes from the relation $\rel_i$.
  We consider the representative case where closure $\clo{\vars}{V[\theta_1]
  \leq V[\theta_2] \ofv \delta} \in \SDO$, i.e., 
  	$(V[\theta_1][\sigma], V[\theta_2][\sigma])\in \rel^\delta$.
  Assume that $V[\Xi_V] = C \in \Sg_0$.
  Lemma~\ref{lem:closure} ensures that 
  	$\clo{\vars}{A \leq B \ofv \xi\mid\delta} \in \SDO$,
  	for all $A/\alpha \in \theta_1$ and $B/\alpha \in \theta_2$
  and $\alpha \ofv \xi \in \Xi_V$.
  By definition of $\rel_0$, 
  we thus have
  \begin{equation}
  	\label{eq:arguments_in_R0}
  	(A[\sigma],B[\sigma])\in \rel_0^{\xi\mid \delta},
  	\text{ for each $A/\alpha \in \theta_1$ and $B/\alpha \in \theta_2$
  and $\alpha \ofv \xi \in \Xi_V$.}
  \end{equation} 
  We proceed with a continuation of $C$.
  Let $V_0[\theta_0]$ denote a continuation occurring in a position 
  of variance $\eta$ on $C$, with definition $V_0[\Xi_0] = D$. 
  We can easily show
  that $\Xi_0\mid \eta \subseteq \Xi_V$.
  To conclude, recall that
  the intermediate renaming ensures that $\theta_0$ is itself a nesting
  of type names $V_1\cdots V_k$, for $k\geq 1$.
  Using~\eqref{eq:arguments_in_R0} and the definition of $\rel_{k+1}$,
  we have
  $(V_0[\theta_0[\theta_1[\sigma]]], V_0[\theta_0[\theta_2[\sigma]]])\in \rel^{\delta\mid \eta}$,
  as we wanted.
  In the subcase where $C = \alpha$ for $\alpha\ofv \xi \in\Xi$,
  the validity of $\Sg_0$ ensures that $\xi = \cov$.
  In this particular case, we have $V [\theta_1] = A$ and
  $V [\theta_2] = B$ and $A/\alpha \in \theta_1$ and $B/\alpha \in \theta_2$.
  From~\eqref{eq:arguments_in_R0}, we conclude that 
  $(A[\sigma], B[\sigma]) \in \rel_0^{\cov\mid\delta}$ and so
  $(A[\sigma], B[\sigma]) \in \rel^{\delta}$.
  Note that $\unfold{V [\theta_1][\sigma]} = A[\sigma]$.
  The next applied rule has to be a structural rule, so
  we can use Lemma~\ref{lem:closure} again to obtain the closure conditions
  of a type simulation.
  
  Thus, $\rel$ is a type simulation.
\end{proof}

Our subtyping algorithm makes a clear distinction between
the two classes of type variables: \emph{type parameters} are 
always substituted by unfolding definitions during algorithm's 
execution, so they do not arise in the subtyping algorithm, whereas
\emph{explicitly quantified variables} arise in the 
program (as parameters to process definitions) or even in the types, so
they are kept in $\vars$.
Quantified variables are only substituted in the program and never 
in the subtyping algorithm. This is the reason why we 
handle a set of quantified variables $\vars$, but not of
type parameters $\Xi$ in our subtyping algorithm.

\subsection{Subtyping declarations}
\label{subsec:eqtypes}

One source of incompleteness in our algorithm is its
inability to \emph{generalize the coinductive hypothesis}.
As an illustration, consider the following two types $D$ and $D'$.
\begin{equation*}
	\begin{array}{rcllcl}
		T[\alpha \ofv \cov] &=& \ichoice{\mb{L} : T[T[\alpha]], \mb{R} : \alpha}
		\qquad &
		D &=& \ichoice{\mb{L} : T[D], \$ : \one}\\
		T'[\beta \ofv \cov] &=& \ichoice{\mb{L} : T'[T'[\beta]], \mb{R} : \beta}
		\qquad &
		D' &=& \ichoice{\mb{L} : T'[D'], \mb{R}: \one, \$ : \one}
	\end{array}
\end{equation*}
To establish $D \leq D'\ofv \cov$, our algorithm explores the $\mb{L}$
branch and checks $T[D] \leq T'[D'] \ofv \cov $.
A corresponding closure $\clo{\cdot}{T[D] \leq T'[D']}$ is
added to $\Gamma$, and our algorithm then checks $T[T[D]] \leq
T'[T'[D']] \ofv \cov$.
This process repeats until it exceeds the depth bound and terminates
with an inconclusive answer.
What the algorithm never realizes is that $T[\theta] \leq T'[\theta] \ofv \cov$
for all substitutions $\theta$ over $\alpha$; it fails to generalize to this hypothesis and
is always inserting closed subtyping constraints to $\G$.

To allow a recourse, we permit the programmer to declare \emph{subtyping 
declarations} easily verified by our algorithm (concrete syntax):
\begin{Verbatim}
  eqtype T[x]  <= T'[x]
\end{Verbatim}
Indeed, here we could even \emph{declare} an equality constraint,
meaning that, for $T$ and $T'$, subtyping holds in both directions:
\begin{Verbatim}
  eqtype T[x] = T'[x]
\end{Verbatim}

Then, we \emph{seed} the $\G$ in the subtyping algorithm with the corresponding
closure from the $\m{eqtype}$ declaration which can then
be used to establish $D \leq D' \ofv \cov$\,:
\[
  \cdot \semi \clo{x}{T[x/\alpha] \leq T'[x/\beta]}, \clo{x}{T'[x/\beta] \leq T[x/\alpha]} \vdash D \leq D' \ofv \cov\,.
\]
Upon exploring the $\mb{L}$ branch, it reduces to
\[
\begin{array}{lll}
	  \cdot \semi \clo{x}{T[x/\alpha] \leq T'[x/\beta]}, \clo{x}{T'[x/\beta] \leq T[x/\alpha]}, 
  \clo{\cdot}{D \leq D'}
  \vdash \qquad\qquad \\
  \hfill T[D] \leq T'[D'] \ofv \cov.
\end{array}
\]
This last inequality holds under substitution $\sigma = D/x $ over
$\clo{x}{T[x/\alpha] \leq T'[x/\alpha]}$, as required
by the $\m{def}$ rule.

In the implementation, we first collect all the \verb`eqtype`
declarations in the program into a global set of closures $\G_0$.
We then verify the \emph{validity} of every \verb`eqtype` declaration, 
by checking
$\vars \semi \G_0 \vdash A \leq B \ofv \cov$ for every pair $(A,B)$
(with free variables $\vars$) in the \verb`eqtype` declarations.
Essentially, this ensures that all subtyping declarations are
valid with respect to each other.
Finally, all subtyping checks are performed under this more
general $\G_0$. The soundness of this approach can be proved
with the following theorem.
\begin{theorem}[Seeded Soundness]\label{thm:tpeq_sound_seeded}
  For a valid set of $\m{eqtype}$ declarations $\G_0$, if
  $\vars \semi \G_0 \vdash A \leq B\ofv \delta$, then
  $\forall \vars.\, A \leq B \ofv \delta$.
\end{theorem}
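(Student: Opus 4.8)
The plan is to replay the construction from the proof of Theorem~\ref{thm:soundness}, but to harvest closures not only from the given derivation but also from the derivations that witness the validity of $\G_0$. Write $\mathcal{D}_0$ for the given derivation of $\vars_0 \semi \G_0 \vdash A_0 \leq B_0 \ofv \delta$. Since $\G_0$ is valid, for every seed closure $\clo{\vars'}{C \leq C' \ofv \cov} \in \G_0$ there is an algorithmic derivation $\mathcal{E}_{C,C'}$ of $\vars' \semi \G_0 \vdash C \leq C' \ofv \cov$. Set
\[
  \mathcal{S} \;=\; \mathcal{S}(\mathcal{D}_0) \;\cup\; \bigcup\,\{\, \mathcal{S}(\mathcal{E}_{C,C'}) \mid \clo{\vars'}{C \leq C' \ofv \cov} \in \G_0 \,\}.
\]
The crucial observation is that $\G_0 \subseteq \mathcal{S}$: each seed closure $\clo{\vars'}{C \leq C' \ofv \cov}$ is exactly the closure read off the root sequent of $\mathcal{E}_{C,C'}$, so it lies in $\mathcal{S}(\mathcal{E}_{C,C'})$.

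Next I would establish the analogue of the Closure Invariants (Lemma~\ref{lem:closure}) for $\mathcal{S}$: whenever a closure in $\mathcal{S}$ arises from a rule application in one of the derivations $\mathcal{D}_0$ or $\mathcal{E}_{C,C'}$, the sub-closures prescribed by that rule also belong to $\mathcal{S}$. For every rule except $\m{def}$ these sub-closures are conclusions of the premise sequents of the same derivation, hence in the corresponding $\mathcal{S}(\cdot) \subseteq \mathcal{S}$, exactly as in Lemma~\ref{lem:closure}; this covers $\m{expd}$ too, since the closure it inserts into the ambient context is literally the closure of its conclusion. The only genuinely new point concerns the $\m{def}$ rule, which appeals to a closure present in the ambient context $\G$. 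In the seeded algorithm that context is always $\G_0$ extended by $\m{expd}$-inserted closures: if the appealed-to closure was inserted by an ancestor $\m{expd}$ it is a sequent conclusion and hence in $\mathcal{S}$; if it is one of the seed closures of $\G_0$ it is in $\mathcal{S}$ by the observation above. Either way the invariant is preserved.

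With the closure invariants available for $\mathcal{S}$, I would run the construction in the proof of Theorem~\ref{thm:soundness} verbatim, with $\mathcal{S}$ in place of $\mathcal{S}(\mathcal{D}_0)$: define $\rel_0$ from the closures of $\mathcal{S}$ using the variance-based readings of Definition~\ref{def:variance_based_rel} (all seed closures carry variance $\cov$, so they contribute only to the covariant part), define the $\rel_i$ for $i \geq 1$ as there, let $\rel$ be the reflexive–transitive closure of $\bigcup_i \rel_i$, and verify that $\rel$ is a type simulation by the same case analysis on the rule that introduced each closure. No case changes, since the rule set is unchanged and every $\m{def}$-reference is now resolved inside $\mathcal{S}$. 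Finally, since $\clo{\vars_0}{A_0 \leq B_0 \ofv \delta} \in \mathcal{S}(\mathcal{D}_0) \subseteq \mathcal{S}$, for every closed substitution $\sigma$ over $\vars_0$ we obtain $(A_0[\sigma], B_0[\sigma]) \in \rel^\delta$, which is precisely the condition of Definition~\ref{def:subtp} for $\forall \vars_0.\, A_0 \leq B_0 \ofv \delta$.

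I expect the main obstacle to be the apparent circularity in the second step: the validity derivations $\mathcal{E}_{C,C'}$ are themselves run against $\G_0$, so they may invoke the very closures we are trying to justify. This is not handled by a fresh induction but is dissolved by the choice of $\mathcal{S}$: $\m{def}$ is the only rule that consults the ambient context at all, and by unioning the closure sets of all the relevant derivations at once we ensure $\G_0 \subseteq \mathcal{S}$, so every such consultation lands inside $\mathcal{S}$. Once this packaging is set up correctly, the soundness argument is a direct re-run of the proof of Theorem~\ref{thm:soundness}.
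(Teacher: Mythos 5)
Your overall strategy is the same as the paper's: the paper only sketches this proof (``our soundness proof can easily be modified\ldots all closed instances of $\G_0$ are added to the simulation''), and your packaging---taking the union of $\mathcal{S}(\mathcal{D}_0)$ with the closure sets of the validity derivations $\mathcal{E}_{C,C'}$ so that $\G_0 \subseteq \mathcal{S}$---is a sensible way to make that sketch precise. The membership bookkeeping (every $\m{def}$-reference lands inside $\mathcal{S}$, every premise closure is a sequent conclusion of some derivation in the union) is correct.

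However, there is a genuine gap in the claim that ``no case changes'' when verifying that $\rel$ is a simulation. In the unseeded proof of Theorem~\ref{thm:soundness}, the $\m{def}$ case does more than cite membership of the appealed-to closure $\clo{\vars'}{V_1[\theta_1'] \leq V_2[\theta_2'] \ofv \delta}$ in $\SDO$: it implicitly relies on the fact that this closure was placed into $\G$ by an ancestor $\m{expd}$, so that somewhere in the derivation $V_1$ and $V_2$ were actually unfolded and a \emph{structural} rule verified that their definitions have matching top-level shape (matching constructors, the required label inclusions) with corresponding continuations. That is what licenses the step ``the continuation types for $C$ and $D$ are type names $V_2[\theta_A]$ and $U_2[\theta_B]$ occurring at variance $\xi$.'' A seed closure enjoys no such guarantee: its validity derivation $\mathcal{E}_{C,C'}$ is itself run against $\G_0$, and since $\m{def}$ takes priority over $\m{expd}$, the root of $\mathcal{E}_{C,C'}$ can discharge immediately by $\m{def}$ against that very seed (with $\sigma$ the identity), its premises collapsing to reflexivity checks. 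In that case no derivation in your union ever expands $V_1$ and $V_2$, the structural compatibility of their definitions is never checked, and the pair of unfoldings can fail the conditions of Definition~\ref{def:subtp_rel}---so $\rel$ is not a simulation even though all your closure invariants hold. To close the gap you must argue (or impose) that the validity derivations are \emph{productive}: e.g., that each seed is checked with itself withheld from the ambient context, or that every cycle of $\m{def}$-references among seeds passes through at least one $\m{expd}$ whose premise is handled by a structural rule. With that guardedness condition in place, your argument goes through; without it, the union construction alone does not suffice.
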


Our soundness proof can easily be modified to accommodate
this requirement. Intuitively, since $\G_0$ is valid, all
closed instances of $\G_0$ are already proven to be valid 
subtyping relations and consistent with each other.
Thus, all properties of a type simulation are preserved
when all closed instances of $\G_0$ are added to it.

One final note on the rule of reflexivity: if a type name does \emph{not}
depend on its parameter, at the variance assignment stage, the type 
parameter is assigned variance $\nonv$.  
Hence, its type definition is of the form $V[\alpha\ofv \nonv] = A$.
To prove that $V[T] \leq V[U] \ofv \xi$ using the $\m{refl}$ rule, 
we need to prove that $T \leq U\ofv \nonv$,
which holds due to our special (permissive) rule for $\nonv$ 
and regardless of $T$ and $U$.

\section{Implementation}
\label{sec:implementation}

We have implemented a prototype for nested polymorphic session
types with subtyping and integrated it with the open-source Rast system~\cite{Das20fscd}.
Rast (Resource-Aware Session Types) is a programming language
that implements the intuitionistic version of session types~\cite{Caires10concur}
with support for arithmetic refinements~\cite{Das20CONCUR,Das20PPDP},
ergometric~\cite{das2018work} and temporal~\cite{Das18Temporal} types
for complexity analysis, and nested session types~\cite{Das21esop}.

In this work, however, we focus on the subtyping implementation
and how that connects to the Rast language.
The Rast implementation uses a bi-directional type checker~\cite{pierce2000local}
requiring programmers to specify the initial type of each channel
of a process.
The intermediate types are then reconstructed automatically
using the syntax-directed typing system.

\paragraph{Syntax}
For clarity of the examples in~\autoref{sec:examples}, 
we provide the syntax of our programs, omitting details on process 
definitions. A program contains a series of mutually recursive type 
and process declarations and definitions. It is also provided with
auxiliary subtyping declarations.
\begin{lstlisting}
  type V[x1]...[xn] = A
  eqtype V[x1]...[xm] <= U[y1]...[yk]
  decl f[x1]...[xi] : (c1 : A1) ... (cj : Aj) |- (c : B)
\end{lstlisting}
The first line represents a \emph{type definition}, where $V$ is the type name
parameterized by type parameters $x_1, \ldots, x_n$ and $A$ is its definition. 
The second line stands for \emph{subtyping declarations}, that seed the subtyping 
algorithm with additional constraints. The third
line is a \emph{process declaration}, where $f$ is a process name 
parameterized by type variables $x_1, \ldots, x_i$ and
$(c_1 : A_1) \ldots (c_j : A_j)$ are the used channels and corresponding
types; the offered channel is $c$ and has type $B$. A process declaration is
followed by its definition, which we omit because it is not 
completely necessary for understanding the program's communication 
behavior.

Once the program is parsed and its abstract syntax tree is extracted,
we perform a \emph{validity check} on it. This includes checking that
type definitions, process declarations, and process definitions are closed
with respect to the type variables in scope.
To simplify and improve the subtyping
algorithm's efficiency, we also assign internal names to type subexpressions
parameterized over their free variables. These internal names
are not visible to the programmer. Once the internal names
are collected, the most informative variances are assigned to 
all type parameters in all type definitions. 
For that purpose, we start by assigning variance $\nonv$ 
to all type parameters and then we calculate the least fixed point~\cite{Altidor11pldi}.

\paragraph{Type Checking}
After the most informative variances are assigned to all type parameters,
we check the validity of \verb+eqtype+ declarations. This validity
is performed by checking each declaration against each other 
(recall~\autoref{subsec:eqtypes}).
These declarations are then incorporated 
in the set of closures $\G$, where the algorithm will
collect subtyping constraints during its execution.

The type checker explicitly calls the subtyping algorithm in 3 places:
(i) forwarding, (ii) sending/receiving channels, and (iii) process spawning.
Earlier, Rast used type equality for typing these 3 constructs.
We relax type equality in Rast to subtyping in two ways.
If a process $P$ offers channel of type $A$, we also safely allow $P$
to offer $A'$ if $A \leq A'$. Dually, if a process $P$ uses a channel
of type $A$, we allow it to use $A'$ if $A' \leq A$.

When a type equality needs to be verified, either in the provided
declarations or during algorithm's execution, the type checker 
verifies subtyping in both directions.
\section{Examples}
\label{sec:examples}

In the following examples, we focus on types, rather than process code, because the types alone suffice to describe the essential communication behavior.
The complete Rast code for these examples can be found in the supplementary material.

\subsection{Stacks}

As an example of nested types and subtyping, we will present an implementation of stacks of natural numbers, where the stack type tracks the stack's shape using the nested type parameter.
(These stacks could also be made polymorphic in the type of data elements, but here we choose to use monomorphic data so that we may focus on the use of nested types, subtyping, and keep the syntactic overhead down.
At the end of this section, we will give the polymorphic types.)

\subsubsection{Stacks of natural numbers}

At a basic level, stacks of natural numbers can be described by the (lightly nested) type
\begin{verbatim}
type Stack' = &{ push: nat -o Stack' , pop: Option[Stack'] }
\end{verbatim}
where
\begin{verbatim}
type Option[k] = +{ some: nat * k , none: 1 }
\end{verbatim}
Each stack supports push and pop operations with an external choice between \verb`push` and \verb`pop` labels.
If the stack's client chooses \verb`push`, then the subsequent type, \verb`nat -o Stack'`, requires that the client send a \verb`nat`; then the structure recurs at type \verb`Stack'` to continue serving push and pop requests.
If the stack's client instead chooses \verb`pop`, then the subsequent type, \verb`Option[Stack']`, requires the client to branch on whether the stack is nonempty -- whether there is \verb`some` natural number or \verb`none` at all at the top of the stack.

\subsubsection{Type nesting enforces an invariant}

Implicit in this description of how a stack would offer type \verb`Stack'` is a key invariant about the stack's shape: popping from a stack onto which a natural number was just pushed should always yield \verb`some` natural number, never \verb`none` at all.
However, the type \verb`Stack'` cannot enforce this pop-after-push invariant precisely because it does not track the stack's shape -- \verb`Stack'` can be used equally well to type empty stacks as to type stacks containing three natural numbers, for instance.
But by taking advantage of the expressive power provided by nested types, we can enforce the invariant by defining a type \verb`Stack[k]` of \verb`k`-shaped stacks of natural numbers that can enforce the pop-after-push invariant.

We will start by defining two types that describe shapes.
\begin{verbatim}
type Some[k] = +{ some: nat * k }
type None = +{ none: 1 }
\end{verbatim}
The shape \verb`Some[k]` describes a \verb`k` with \verb`some` natural number added on top; the type \verb`None` describes an empty shape.
Notice that both \verb`Some[k]` and \verb`None` are subtypes of \verb`Option[k]`, for all types \verb`k`.


The idea is that \verb`Stack[None]` will type empty stacks -- because they have the shape \verb`None` -- and that \verb`Stack[Some[Stack[None]]]` will type stacks containing one natural number -- because they have \verb`Some` natural number on top of an empty stack
-- and so on for stacks of larger shapes.

More generally, the type \verb`Stack[k]` describes \verb`k`-shaped stacks:
\begin{verbatim}
type Stack[k] = &{ push: nat -o Stack[Some[Stack[k]]] , pop: k }
\end{verbatim}
Once again, each stack supports push and pop operations with an external choice between \verb`push` and \verb`pop` labels.
This time, however, pushing a \verb`nat` onto the stack
leads to type \verb`Stack[Some[Stack[k]]]`,
i.e., a stack with \verb`Some` natural number on top of a \verb`k`-shaped stack.
Equally importantly, popping from a \verb`k`-shaped stack exposes the shape \verb`k`.

Together these two aspects of the type \verb`Stack[k]` serve to enforce the pop-after-push invariant.
Suppose that a client pushes a natural number, say $0$, onto a stack \verb`s` of type \verb`Stack[k]`.
After the push, the stack \verb`s` will have type \verb`Stack[Some[Stack[k]]]`.
If the client then pops from \verb`s`, the type becomes \verb`Some[Stack[k]]`, which is \verb`+{ some: nat * Stack[k] }`.
This means that the client will always receive \verb`some` natural number, never \verb`none` at all because \verb`none` is not part of this type.
And that is how the type \verb`Stack[k]` enforces the pop-after-push invariant.

Given this type constructor, the empty stack can be expressed as a process of type \verb`Stack[None]`:
\begin{verbatim}
decl empty : . |- (s : Stack[None])
\end{verbatim}
And we can define a process \verb`elem[k]` that constructs a stack with shape given by \verb`Some[Stack[k]]`, from a natural number and a stack of shape \verb`k`:
\begin{verbatim}
decl elem[k] : (x:nat) (t : Stack[k]) |- (s : Stack[Some[Stack[k]]])
\end{verbatim}

\subsubsection{Subtyping}

So far, we have taken advantage of the expressive power of nested types to guarantee adherence to an invariant about a stack's shape.
The strength and precision of these types is a double-edged sword, however.
Sometimes we will want to implement operations for which it is difficult or even impossible to maintain the stronger, more precise types across the operation.
Subtyping will allow us to fall back on the weaker, less precise types in these cases.

For example, our more precise stack types \verb`Stack[None]` and \verb`Stack[Some[k]]` are both subtypes of the less precise \verb`Stack'` type if \verb`k` is a subtype of \verb`Stack'`.
So, as a specific example, from stacks of type \verb`Stack[None]` or \verb`Stack[Some[Stack']]`, we can always fall back on the less precise \verb`Stack'`.
Our Rast implementation is capable of verifying this instance
using the following \verb`eqtype` declarations.
\begin{verbatim}
eqtype Stack[None] <= Stack'
eqtype Stack[Some[Stack']] <= Stack'
eqtype Stack[Option[Stack']] <= Stack'
\end{verbatim}



As an example of where this subtyping can be useful, consider an operation to reverse a stack.
Because our system does not include intersection types -- an opportunity for future work -- it is difficult to express the invariant that reversing a stack preserves its shape, making it difficult to use the more precise \verb`Stack[None]` and \verb`Stack[Some[k]]` types.
Instead, we use the type \verb`Stack'` to describe the stack before and after reversal:
\begin{verbatim}
decl reverse : (t : Stack') |- (s : Stack')
\end{verbatim}
This \verb`reverse` process would be implemented by way of a helper process:
\begin{verbatim}
decl rev_append : (t : Stack') (a : Stack') |- (s : Stack')
\end{verbatim}
Specifically, we would call \verb`rev_append` with the accumulator \verb`a` being the \verb`empty` stack process mentioned previously.
But \verb`empty` has the type \verb`Stack[None]`, whereas \verb`rev_append` uses the less precise \verb`Stack'` type.
Subtyping is how we are able to bridge this gap and implement stack reversal.



Above, we showed that we can move from a more precise \verb`Stack[None]` or \verb`Stack[Some[Stack']]` type to a less precise \verb`Stack'` type using subtyping, and showed how that can be useful.
But by moving to the less precise \verb`Stack'` type, we are not irreversibly losing information about the stack's shape.
Given a \verb`Stack'`, we can recover the more precise type \verb`Stack[Option[Stack']]` for that stack by using the process
\begin{verbatim}
decl shaped : (t : Stack') |- (s : Stack[Option[Stack']])
\end{verbatim}
This \verb`shaped` process acts as a kind of coercion that analyzes the shape of \verb`Stack'` and constructs a \verb`Stack[Option[Stack']]`.
This coercion is operationally not the identity function, so \verb`Stack'` is not a \emph{subtype} of \verb`Stack[Option[Stack']]`, but we can still use it as an explicit coercion for recovering information about the stack's shape.


\subsubsection{Polymorphic stacks}

Lastly, we close this example by generalizing the above types for stacks of natural numbers to types for stacks that are polymorphic in the type \verb`a` of data that they hold.

The type for polymorphic stacks of unknown shape is:
\begin{verbatim}
type Stack'[a] = &{push: a -o Stack'[a], pop: Option[a][Stack'[a]]}
\end{verbatim}
and the type for polymorphic stacks of shape \verb`k` is:
\begin{verbatim}
type Stack[a][k] = &{push: a -o Stack[a][Some[a][Stack[a][k]]], pop: k}
\end{verbatim}
where
\begin{verbatim}
type Option[a][k] = +{ some: a * k , none: 1 }
type Some[a][k] = +{ some: a * k }
type None = +{ none: 1 }
\end{verbatim}

Notice that the type \verb`Stack'[a]` is bivariant in \verb`a`, with \verb`a` occurring in both negative and positive positions.
Thus there is no subtyping relationship at all between \verb`Stack'[nat]` and \verb`Stack'[even]`.

The type \verb`Stack[a][k]` is also bivariant in \verb`a`, although for a slightly more intricate reason:
\verb`Stack[a][k]` is covariant in \verb`k` and so the occurrence of \verb`a` in \verb`Some[a][Stack[a][k]]` in the larger type \verb`Stack[a][Some[a][Stack[a][k]]]` is a positive one, in contrast with the negative occurrence in the linear implication.
Like for \verb`Stack'[ ]`, there is no subtyping relationship between types \verb`Stack[nat][k]` and \verb`Stack[even][k]`.

\subsection{Queues}

We can similarly use a combination of nested types and subtyping to describe queues.
The type \verb`Queue'` describes a queue with an external choice between \verb`enq` and \verb`deq` labels:
\begin{verbatim}
type Queue' = &{ enq: nat -o Queue' , deq: Option[Queue'] }
\end{verbatim}
As for stacks, there is a dequeue-after-enqueue invariant that is not enforced by this type.
We can again use nested types to enforce this invariant:
\begin{verbatim}
type Queue[k] = &{ enq: nat -o Queue[Some[Queue[k]]] , deq: k }
\end{verbatim}

At first glance, it seems obvious that this should be possible for queues -- it worked for stacks, after all.
However, the nature of parametric type constructors like \verb`Stack[k]` and \verb`Queue[k]` is that we can only build up types from the parameter \verb`k`, never analyze \verb`k` to break it down into smaller types.
This means that the nested type operates something like a stack.
But unlike stacks, queues insert new elements at the back, so it is, in fact, somewhat surpring that this same technique works to enforce the dequeue-after-enqueue invariant for queues.

Similar to the stacks, we can move from the more precise shaped types \verb`Queue[None]` and \verb`Queue[Some[Queue']]` to the less precise type \verb`Queue'` by way of subtyping.
But, like the stacks, we can also express a coercion from the less precise \verb`Queue'` to the more precise \verb`Queue[Option[Queue']]`:
\begin{verbatim}
decl shaped : (q' : Queue') |- (q : Queue[Option[Queue']])
\end{verbatim}

\subsection{Dyck language}

As a simple example, consider the language of \verb`$`-terminated strings of balanced parentheses (also known as the Dyck language)~\cite{Dyck1882}.
This language can be described with the type \verb`D0` that uses a type constructor \verb`D[k]` to track the number of unmatched \verb`l`s:
\begin{verbatim}
type D0 = +{ l: D[D0] , $: 1 }
type D[k] = +{ l: D[D[k]] , r: k }
\end{verbatim}
Each occurrence of \verb`D[ ]` correspond to an unmatched \verb`l`.

We can also describe the related language \verb`l`$^n$\verb`r`$^n$\verb`$` (with $n \geq 0$) using nested types:
\begin{verbatim}
type E0 = +{ l: E[+{$:1}] , $: 1 }
type E[k] = +{ l: E[R[k]] , r: k }
type R[k] = +{ r: k }
\end{verbatim}
Here the number of type constructors \verb`R[ ]` corresponds to the number of unmatched \verb`l`s.

The \verb`$`-terminated language of balanced parentheses includes language \verb`l`$^n$\verb`r`$^n$\verb`$`, and this is reflected in the subtyping relationship between \verb`E0` and \verb`D0`: \verb`E0` is a subtype of \verb`D0`.
Our Rast implementation will verify this, provided that we sufficiently generalize the coinductive hypothesis using \verb`eqtype` declarations:
\begin{verbatim}
eqtype R[k] <= D[k]
eqtype E[R[k]] <= D[D[k]]
\end{verbatim}

(Strictly speaking, our types are coinductively defined, so it is not quite right to think of this as language inclusion for finite strings, but instead a property of all of the finite prefixes of potentially infinite strings.)

\section{Additional Related Work}
\label{sec:related-work}

The literature on subtyping in general is vast, so we cannot survey
it; some related work is already mentioned in the introduction.  Once
recursive types and parameterized type constructors are considered
together, there are significantly fewer results, as far as we are aware.

The seminal work in this area in general is that of
Solomon~\cite{Solomon78popl} who showed that the problem of language
equivalence for deterministic pushdown automata (DPDA) is 
interreducible with \emph{type equality} for a
language with purely positive parameterized recursive types under an
inductive interpretation.  Our construction for undecidability has
been inspired by his technique, although we use a reduction from the
related problem of language inclusion for BPAs, rather than DPDAs, due to the coinductive
nature of our definition of subtyping.

Unfortunately, despite the eventual discovery that DPDA language
equality is decidable~\cite{Senizergues02tcs} it does not yield a
practical algorithm.  Instead, our algorithm is based on Gay and
Hole's construction of a simulation~\cite{Gay05acta} for session
types, combined with careful considerations of constructor variance
(following~\cite{Altidor11pldi}) and instantiation of type variables 
when (in essence) constructing circular 
proofs~\cite{Brotherston05tableaux} of subtyping.

Perhaps closest to our system in the functional world is that by Im et 
al.\ which features equirecursive nested types with a
mixed inductive and coinductive definition of type equality and
subtyping~\cite{Im13icalp}.  It also includes modules, but not first-class universal or
existential quantification.  The authors show their system to be sound
for a call-by-value language under the assumption of contractiveness
for type definitions (which we also make).  They only briefly mention
algorithmic issues with no resolution.

\section{Conclusion}
\label{sec:conclusion}

We presented an undecidability proof for a system of subtyping with
nested recursive and polymorphic types under a coinductive
interpretation of types.  We also presented a practical algorithm for
subtyping.  When embedded in a language with bidirectional type
checking and explicit polymorphism, type checking directly reduces to
subtyping and we have implemented and explored our algorithm in the
context of Rast~\cite{Das20fscd}.

Since undecidability of a small fragment is definitive, the most
interesting items of future work concern extensions of the algorithm.

It seems plausible that we can hypothesize additional pairs to add to
the partial simulation in situations when the subtyping algorithm
exceeds its depth bound and consequently fails without a
counterexample.  Currently, this must be done by the programmer.

We conjecture it requires only a minor adaptation to incorporate a
mixed inductive/coinductive definition of subtyping into our
algorithm~\cite{Danielsson10mpc,Brandt98fi}.  This would make it
easily applicable in strict functional languages such as ML\@.

Much work has been done on subtyping polymorphic
types~\cite{Tiuryn02iandc} in implicit form so that
$\forall \alpha.  \alpha \rightarrow \alpha \leq \m{int} \rightarrow
\m{int}$.  We do not consider such subtyping here, since it introduces
new sources of undecidability and significant additional algorithmic
complexities.  It is an interesting question to what extent our
\emph{algorithm} might still apply in the setting of implicit
polymorphism, possibly restrict to the prefix case.

A final item for future work is to consider \emph{bounded
  polymorphism} and extension from nested types to the related
\emph{generalized algebraic data types} (GADTs).  A positive sign here
is the uniform semantics by Johann and Polonsky~\cite{Johann19lics},
disturbed slightly by more recent results pointing out some obstacles
to parametricity~\cite{Johann21arxiv}.


\bibliographystyle{splncs04}
\bibliography{refs}


\end{document}